\definecolor{light-gray}{gray}{0.96}
\definecolor{LightCyan}{rgb}{0.88,1,1}
\newcommand{\defn}[1]{\textbf{\emph{#1}}}
\newtheorem{claim}{Claim}
\newtheorem{lemma}{Lemma}
\newcommand{\B}{\textsc{BEB}}
\newcommand{\LB}{\textsc{Log-Backoff}}
\newcommand{\LLB}{\textsc{LogLog-Backoff}}
\newcommand{\STB}{\textsc{Sawtooth-Backoff}}
\begin{document}
%
% paper title
% Titles are generally capitalized except for words such as a, an, and, as,
% at, but, by, for, in, nor, of, on, or, the, to and up, which are usually
% not capitalized unless they are the first or last word of the title.
% Linebreaks \\ can be used within to get better formatting as desired.
% Do not put math or special symbols in the title.
\title{Is Our Model for Contention Resolution Wrong?\\ \vspace{-5pt}{\LARGE Confronting the Cost of Collisions}}

\author{\IEEEauthorblockN{William C. Anderton}
\IEEEauthorblockA{Department of Computer Science and Engineering\\
Mississippi State University\\
Mississippi, 39762, USA\\
Email:  \texttt{wca36@msstate.edu}}
\and
\IEEEauthorblockN{Maxwell Young\thanks{\noindent This research is supported by the National Science Foundation grant CCF 1613772 and by a research gift from C Spire.}}
\IEEEauthorblockA{Department of Computer Science and Engineering\\
Mississippi State University\\
Mississippi, 39762, USA\\
Email:  \texttt{myoung@cse.msstate.edu}}
}

%\author{William C. Anderton, Maxwell Young% <-this % stops a space
%\thanks{William C. Anderton and Maxwell Young are with the Department of Computer Science and Engineering, Mississippi State University, MS, USA, 39762. Email: \texttt{wca36@msstate.edu} and  \texttt{myoung@cse.msstate.edu}.}% <-this % stops a space
%\thanks{This research is supported by the National Science Foundation grant CCF 1613772 and by a research gift from C Spire.}% <-this % stops a space
%}

\maketitle

\begin{abstract}
Randomized binary exponential backoff (BEB) is a popular algorithm for coordinating access to a shared channel. With an operational history exceeding four decades, BEB is currently an important component of several wireless standards. 

Despite this track record, prior theoretical results indicate that under bursty traffic (1) BEB yields poor makespan  and (2) superior algorithms are possible. To date, the degree to which these findings manifest in practice has not been resolved.

To address this issue, we examine one of the strongest cases against BEB: $n$ packets that simultaneously begin contending for the wireless channel. Using Network Simulator 3, we compare against more recent algorithms that are inspired by BEB, but whose makespan guarantees are superior. Surprisingly, we discover that these newer algorithms significantly underperform.

Through further investigation, we identify as the culprit a flawed but common abstraction regarding the cost of collisions.  Our experimental results are complemented by analytical arguments that the number of collisions -- and not solely makespan -- is an important metric to optimize. We believe that these findings have implications for the design of contention-resolution algorithms.
\end{abstract}

% make the title area
\maketitle

\IEEEpeerreviewmaketitle

\section{Introduction}\label{sec:intro}

\defn{Randomized binary exponential backoff} (\defn{BEB}) plays a critical role  in coordinating access by multiple devices to a shared communication medium. Given its importance, BEB has been studied at length and is known to yield good throughput under well-behaved traffic~\cite{GoldbergMa96a,GoldbergMaPaSr00,HastadLeRo87,RaghavanUp95,Al-Ammal2000,Al-Ammal2001,Goodman:1988:SBE:44483.44488,bianchi:performance,song:stability}. 

In contrast, when traffic is ``bursty'', BEB is suspected to perform sub-optimally.  Under a \defn{single batch} of $n$ \defn{packets} that simultaneously begin contending for the channel, Bender~et al.~\cite{BenderFaHe05} prove that BEB has  $\Theta(n\log n)$ \defn{makespan} (the amount of time until all packets are successfully transmitted). More recent  algorithms have been proposed~\cite{GreenbergFlLa87,BenderFaHe05,bender:how,bender:contention,fineman:contention,fineman:contention2,bender:heterogeneous} with improved makespan regardless of the traffic type.

Together, these results beg the question: {\it How do newer algorithms compare to BEB in practice?} Here, we make progress towards an answer by restricting ourselves to bursty traffic -- in particular, the simplest instance of such traffic: a single burst (\defn{batch}) of packets. This is a prominent case where BEB is anticipated to do poorly, and it should be a straightforward (if laborious) exercise to discover which of the following situations is true: (1) A newer contention-resolution algorithm outperforms BEB, or (2) BEB outperforms newer contention-resolution algorithms.

Interestingly, neither of these outcomes is very palatable. In one form or another, BEB has operated in networks for over four decades and it remains an essential ingredient in several wireless standards.  Bursty traffic can arise in practice~\cite{Teymori2005,yu:study} and its impact has been examined~\cite{Ghani:2010,sarkar:effect,canberk:self,bhandari:performance}. If (1) holds, then BEB is potentially in need of revision and the ramifications of this are hard to overstate.  

Conversely, if (2) holds, then theoretical results are not translating into improved performance.  At best, this is a matter of asymptotics. At worst, this indicates a problem with the abstract model upon which newer results  are based. In this latter case, it is important to understand what assumptions are faulty so that the abstract model can be revised. 

%%%%%%%%%%%%%%%%%%%%%%%%%%%%%%%%%%%
%%%%%%%%%%%%%%%%%%%%%%%%%%%%%%%%%%%
%%%%%%%%%%%%%%%%%%%%%%%%%%%%%%%%%%%

\subsection{A Common Model} 

Given $n$ stations, the problem of \defn{contention resolution} addresses the amount of time until any one of the  stations transmits alone. A natural consideration is the time until a subset of $k$ stations each transmits alone; this often falls under the same label, but is also referred to as $k$-selection~(see~\cite{Anta2010}). 

We focus on the case of $k=n$. Here, much of the algorithmic work shares an abstract model.  Three common assumptions are:
\vspace{-10pt}
\begin{figure}[h]
\begin{center}
{
\fbox{\colorbox{light-gray}{
\begin{minipage}[t]{0.42\textwidth} 

\begin{itemize}[leftmargin=3.5mm]
\item{\it A0.} Time is discretized into \defn{slots}, each of which may accommodate a packet. \vspace{3pt}
\item {\it A1.} If a single packet is transmitted in a slot, the packet \defn{succeeds}, but \defn{failure} occurs if two or more packets transmit simultaneously due to a \defn{collision}. \vspace{-8pt}
\item {\it A2.} The failure of a transmission is known to the sender with negligible delay  beyond the single slot in which the failure occurred.\end{itemize}
\end{minipage}
          }
     }     
}
\end{center}
\vspace{-13pt}
 \end{figure}

Assumption A0 is near universal, but technically inaccurate for reasons discussed in Section~\ref{sec:802.11}. To summarize, slots in a contention window are used to obtain ownership of the channel. However, transmission of the full packet may occur past this contention-window slot while all other stations pause their execution. Therefore, this assumption is sufficiently close to reality that we should not expect performance to deviate greatly as a result.

Examples of assumption A1 abound (for example~\cite{Anta2010,komlos:asymptotically,Capetanakis:2006,bender:heterogeneous,bender:contention,bender:how,fineman:contention2}), although variations exist. A  compelling alternative is the signal-to-noise-plus-interference (SINR) model~\cite{avin:sinr,moscibroda:worst} which is less strict about failure in the event of simultaneous transmissions. Another model that has received attention is the affectance model~\cite{Hall2009}. Nevertheless, these all share the reasonable assumption that simultaneous transmissions may negatively impact performance.

Assumption A2 is also widely adopted (see the same examples for A1) and implicitly addresses two quantities that affect performance: the time to transmit a packet, and the time to receive any feedback on success or failure. Assigning a delay of $1$ slot to these quantities admits a model where the problem of contention resolution is treated separately from the functionality for collision detection. Such functionality is provided by a medium access control (MAC) protocol  -- of which the contention-resolution algorithm is only one component -- and is not captured by A2.

%%%%%%%%%%%%%%%%%%%%%%%%%%%%%%%%%%%%%
%%%%%%%%%%%%%%%%%%%%%%%%%%%%%%%%%%%%%
%%%%%%%%%%%%%%%%%%%%%%%%%%%%%%%%%%%%%

\subsubsection{Demonstrating a Flawed Assumption}

Our main thesis is that A2 is  flawed in the wireless setting; that is,  the cost of failure is far more significant than the abstract model acknowledges. This is not a matter of minor adjustments to the assumption, or an artifact of hidden constants in the algorithms examined. Rather, the way in which failures -- in particular, collisions -- are detected cannot be isolated from the problem of contention resolution. 

Several corollaries follow from this thesis, all indicating that accounting for such failures should be incorporated into algorithm design. For a range of wireless settings, contention-resolution algorithms that ignore this will likely not perform as advertised  when deployed within a MAC protocol (see Section~\ref{sec:interpret}). We demonstrate this for the popular IEEE 802.11g standard.\vspace{-3pt}

%%%%%%%%%%%%%%%%%%%%%%%%%%%%%%%%%%%%%
%%%%%%%%%%%%%%%%%%%%%%%%%%%%%%%%%%%%%
%%%%%%%%%%%%%%%%%%%%%%%%%%%%%%%%%%%%%

\subsection{Overview of BEB in IEEE 802.11g}\label{sec:802.11}

To understand our findings, it is helpful to summarize IEEE 802.11g and how BEB operates within it. However, outside of this section and the description of our experimental setup,  discussion of such aspects and terminology is kept to a minimum. Throughout, we will often use interchangeably the terms {\it packets} and {\it stations} depending on the context; the two uses are equivalent given that each station seeks to transmit a single packet in the single-batch case.

Exponential backoff~\cite{MetcalfeBo76} is a widely deployed algorithm for distributed multiple access. Informally, a backoff algorithm operates over a \defn{contention window (CW)} wherein each \defn{station} makes a single randomly-timed access attempt. In the event of two or more simultaneous attempts, the result is a collision and none of the stations succeed.  Backoff seeks to avoid collisions by dynamically increasing the contention-window size such that stations succeed.

IEEE 802.11 handles contention resolution via the \defn{distributed coordination function (DCF)} which employs BEB; as the name suggests, successive CWs double in size under BEB. The operation of DCF is summarized as follows. Prior to transmitting data, a station first senses the channel for a period of time known as a \defn{distributed inter-frame space (DIFS)}. If the channel is not in use over the DIFS, the station transmits its data; otherwise, it waits until the current transmission finishes and then initiates BEB.  

%\footnote{Sensing the channel is performed by the clear channel assessment (CCA) mechanism. The details of CCA and the related network allocation vector (NAV) mechanism are not critical to our findings and we omit their details.}

For a  contention window of size CW, a timer value is selected uniformly at random from  $[0, CW-1]$. So long as the channel is sensed to be idle, the timer counts down and, when it expires, the station transmits. However, if at any prior time  the channel is sensed busy, BEB is {\it paused for the duration of the current transmission}, and then resumed (not restarted) after another DIFS.

After a station transmits, it awaits an {\bf acknowledgement (ACK)} from the receiver. If the transmission was successful, then the receiver waits for a short amount of time known as a \defn{short inter-frame space (SIFS)} -- of shorter duration than a DIFS --  before sending the ACK. Upon receiving an ACK, the station learns that its transmission was successful.  Otherwise, the station waits for an ACK-timeout duration before {\it concluding that a collision occurred}.  

This series of actions is referred to as \defn{collision detection}; the cost of which lies at the heart of our argument.  If a collision is detected, then the station must attempt a retransmission via the same process with its CW doubled.

Figure~1 illustrates the operation of DCF. Note that both the transmission of data and the acknowledgement process occur ``outside'' of the backoff component of DCF. Yet,  the focus of many algorithmic results is solely on the slots of this backoff component. % that is, and we will be making our comparison between theory and experiment over these slots.

Finally,  RTS/CTS (request-to-send and clear-to-send) is an optional mechanism. Informally, a station will send an RTS message and await an CTS message from the receiver prior to transmitting its data. Due to increased overhead, RTS/CTS is often only enabled for large packets. Therefore, we focus on the case where RTS/CTS is disabled, although our experiments show that our findings continue to hold when this mechanism is used (see Section~\ref{sec:hidden}). \vspace{-0pt}

%%%%%%%%%%%%%%%%%%%%%%%%%%%%%%%%%%%
%%%%%%%%%%%%%%%%%%%%%%%%%%%%%%%%%%%
%%%%%%%%%%%%%%%%%%%%%%%%%%%%%%%%%%%

\begin{figure}[t]
\vspace{-15pt}
\captionsetup[subfigure]{labelformat=empty}
\centering
\hspace{-0.3cm}\begin{subfigure}{0.45\textwidth} 
\includegraphics[width=1.15\textwidth]{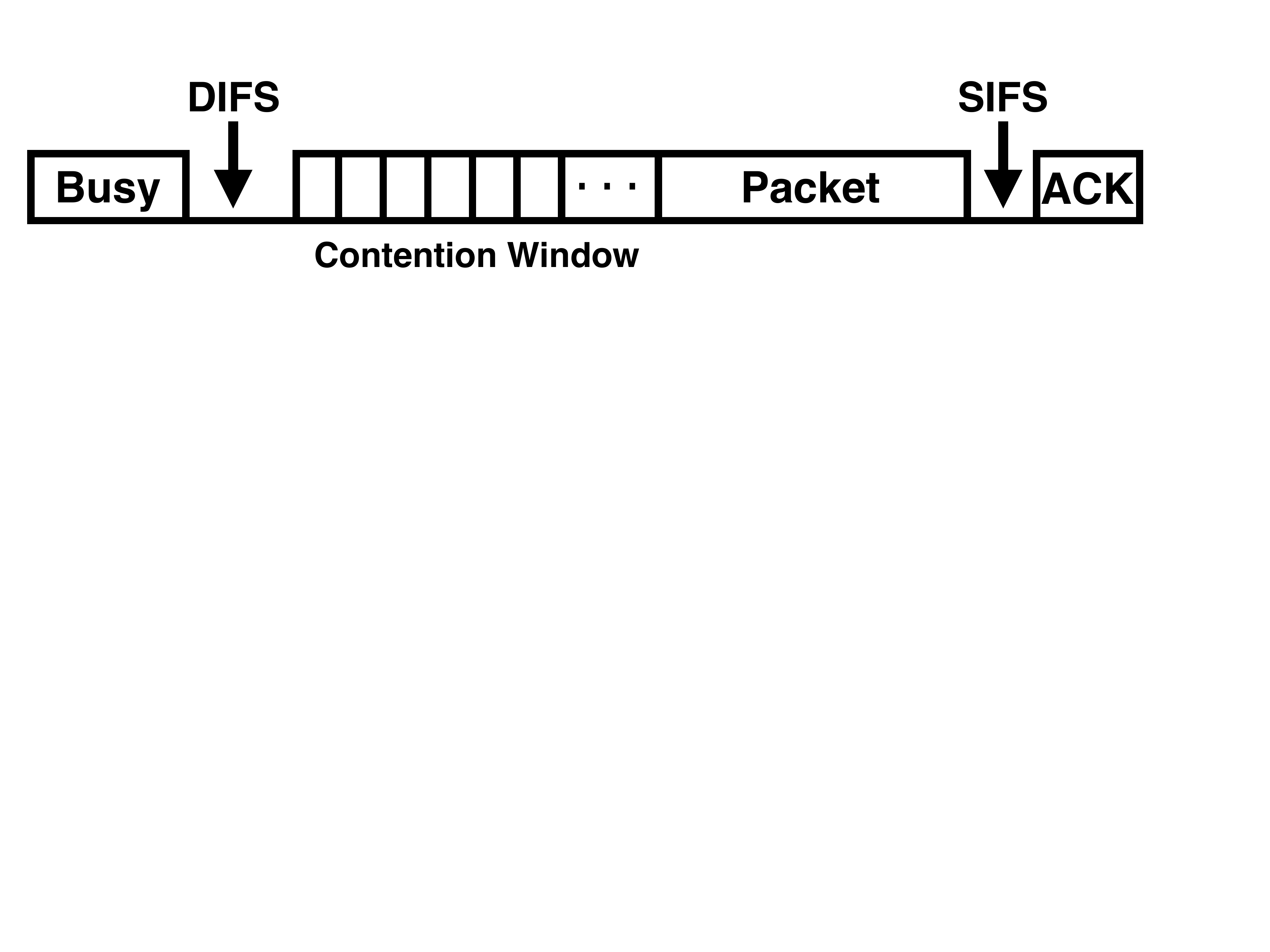} 
\vspace{-5.5cm}
\end{subfigure}
\caption*{{\bf Figure 1:} Illustration of DCF.} 
\vspace{-0.5cm}
\end{figure}

%%%%%%%%%%%%%%%%%%%%%%%%%%%%%%%%%%%
%%%%%%%%%%%%%%%%%%%%%%%%%%%%%%%%%%%
%%%%%%%%%%%%%%%%%%%%%%%%%%%%%%%%%%%

\section{Experimental Setup}\label{sec:experimental}

We employ Network Simulator 3 (NS3)~\cite{NS3} which is a widely used network simulation tool in the research community~\cite{weingartner:performance}.  Our experimental setup is described here for the purposes of reproducibility.\footnote{Our simulation code and data will be made available at \texttt{www.maxwellyoung.net}.}

% that provides an extraordinary level of detail at each layer of the protocol stack. 

%%%%%%%%%%%%%%%%%%%%%%%%%%%%%%%%%%%
%%%%%%%%%%%%%%%%%%%%%%%%%%%%%%%%%%%
%%%%%%%%%%%%%%%%%%%%%%%%%%%%%%%%%%% 
 
\begin{table} \vspace{-2pt}
\begin{center}
{
\begin{tabular}{ |p{4cm}|p{3cm}|  }
\hline
\rowcolor{LightCyan}\hspace{37pt}{\bf Parameter} &  \hspace{30pt}{\bf Value}  \\
\hline
Data rate &  54 Mbits/sec \\
\hline
Wireless specification &  802.11g \\
\hline
Slot duration & 9$\mu$s  \\
\hline
SIFS  &  16$\mu$s   \\
\hline
DIFS  &   34$\mu$s \\
\hline
ACK timeout &  75$\mu$s\\
\hline
%CRC length  & 4 bytes \\
%\hline
Preamble & 20$\mu$s  \\
\hline
Transport layer protocol & UDP\\
\hline
%Payload size & 12 or 64 bytes \\
%\hline
Packet overhead & 64 bytes\\
\hline
%PHY modulation & OFDM\\
%\hline
Contention-window size min. & 1\\
\hline
Contention-window size max.& 1024  \\
\hline
RTS/CTS & Off \\
\hline
\end{tabular}
}\vspace{-0pt}\caption{Parameter values used in our experiments.}\label{table:parameters}\vspace{-15pt}
\end{center}
\end{table}

%%%%%%%%%%%%%%%%%%%%%%%%%%%%%%%%%%%
%%%%%%%%%%%%%%%%%%%%%%%%%%%%%%%%%%%
%%%%%%%%%%%%%%%%%%%%%%%%%%%%%%%%%%%

Our reasons for using NS3 are twofold. First, wireless communication is difficult to model and employing NS3 helps allay concerns that our findings are an artifact of poorly-modeled wireless effects.  Second, given the assumptions upon which contention-resolution algorithms are based, NS3 can reveal whether we are being led astray by an assumption that appears reasonable, but results in a {\it significant} discrepancy between theory and practice.

Table~\ref{table:parameters} provides our experimental parameters. Path-loss models with default parameters are known to be faithful~\cite{stoffers:comparing} and, therefore, our experiments employ the  log-distance propagation loss model in NS3. For transmission and reception of packets (frames), we use the YANS~\cite{lacage:yans} module which provides an additive-interference model. 

At the MAC layer, we make use of IEEE 802.11g and we implement changes to the growth of the contention window based on the algorithms we investigate. All experiments use IPv4 and UDP.%\footnote{Our investigation employed UDP instead of TCP to reduce the impact of potential transport-layer effects that could have complicated the interpretation of our results. Ultimately, given the explanation for our findings, the use of TCP does not alter our conclusions.}

The amount of overhead for each packet is 64-bytes: 8 bytes for UDP, 20 bytes for IP, 8 bytes for an LLC/Snap header, and 28 bytes of additional overhead at the MAC layer. 

The duration of an acknowledgement (ACK) timeout is specified by the most recent IEEE 802.11 standard\footnote{This is a large document; please see Section 10.3.2.9, page 1317 of~\cite{802.11-standard}.} to be roughly the sum of a SIFS ($16\mu$s), standard slot time ($9\mu$s), and preamble ($20\mu$s); a total of $45\mu$s.  However, in practice, this is subject to tuning.  In our experiments, an ACK-timeout below $55\mu$s gave markedly poor performance; there is insufficient time for the ACK before the sender decides to retransmit. We use the default value of $75\mu$s in NS3 since this is the same order of magnitude and performs well. %(whose implementation also incorporates data-transmission duration)  (experiments with values between $55$ and $75$ did not alter our findings).

%of dimension. at most $40$ meters by $40$ meters.

In our experiments, $n$ stations are placed in a  $40m\times 40m$ grid, and they are laid out starting at the south-west corner of the grid moving left to right by $1$ meter increments, and then up when the current row is filled. A wireless access point (AP) is located (roughly) at the center of the grid. We do not simulate additional terrain or environmental phenomena; our goal is to test the performance under ideal conditions without complicating factors. 

Our experiments are computationally intensive.  Computing resources are provided by the High Performance Computing Collaboratory (HPC$^2$) at Mississippi State University. We employ four identical Linux (CentOS) systems, each with 16 processors (Intel Xeon CPU E5-2690, 2.90GHz) and 396 GB of memory.  %We are able to investigate system sizes up to $n=150$ using NS3. % and, as we will see, this turns out to provide useful data for our investigation. 
\vspace{-5pt}

%%%%%%%%%%%%%%%%%%%%%%%%%%%%%%%%%%%
%%%%%%%%%%%%%%%%%%%%%%%%%%%%%%%%%%%
%%%%%%%%%%%%%%%%%%%%%%%%%%%%%%%%%%%

\begin{figure}[t]\vspace{-7pt}
\begin{mdframed}
\begin{center}

\selectfont

\fbox{\hspace{-7pt}\colorbox{light-gray}{
\begin{minipage}[t]{0.99\textwidth} 

\noindent{}\hspace{-3pt}When a station wishes to transmit a packet:\vspace{-0pt}
	
          \begin{itemize}[leftmargin=3mm]%[noitemsep,nolistsep,leftmargin=9pt]\renewcommand{\labelitemii}{$\circ$}
	
          \item Set the window size $W = 1$.\vspace{-0pt}

          \item Repeat until the packet is successfully transmitted:\vspace{-0pt}
			
          	\begin{itemize}[leftmargin=3mm]%[noitemsep,nolistsep,leftmargin=7pt]\renewcommand{\labelitemii}{$\circ$}
			
                 \item Choose a slot $t$ in the window uniformly at random. Try to transmit in slot $t$.\vspace{-0pt}
			
                  \item If the transmission failed, then: (i) wait until the end of the window, and (ii) set $W \leftarrow  (1+r)W$.
			
               \end{itemize}
			
          \end{itemize}

\end{minipage}
          }
     }     
\end{center}
\end{mdframed}
\vspace{-8pt}
\caption*{{Figure 2:} Generic algorithm for LLB,~LB, and \B~where $r=1/\lg\lg W$, $r=1/\lg W$, and $r=1$, respectively.}\label{fig:generic-backoff}\vspace{-10pt}
 \end{figure}

%%%%%%%%%%%%%%%%%%%%%%%%%%%%%%%%%%%
%%%%%%%%%%%%%%%%%%%%%%%%%%%%%%%%%%%
%%%%%%%%%%%%%%%%%%%%%%%%%%%%%%%%%%%

\section{A Single Batch}\label{sec:single-batch}\vspace{-2pt}

We examine a single batch of $n$ packets that simultaneously begin their contention for the channel. As algorithmic competitors for \B, we take \LB ~(LB), \LLB~(LLB) from~\cite{BenderFaHe05} and \STB~(STB) from~\cite{Gereb-GrausT92,GreenbergL85}. Both LLB and LB are closely related to \B~in that they execute using a CW that increases in size monotonically.  The pseudocode for the algorithms LLB, LB, and BEB is provided in Figure~2. 

In contrast, STB is non-monotonic and executes over a doubly-nested loop. The outer loop sets the current window size $W$ to be double that used in the preceding outer loop; this is like BEB. Additionally, for each such $W$, the inner loop executes over $\lg W$ windows of size $W, W/2, ..., 2$ and, for each window, a slot is chosen uniformly at random for the packet to transmit; this is the ``backon'' component of STB. \vspace{-3pt}

%%%%%%%%%%%%%%%%%%%%%%%%%%%%%%%%%%%
%%%%%%%%%%%%%%%%%%%%%%%%%%%%%%%%%%%
%%%%%%%%%%%%%%%%%%%%%%%%%%%%%%%%%%%

%\rowcolor{blue!62!yellow!38}
\begin{table}[h] 
\begin{center}
{
\begin{tabular}{ |p{4cm}|c|  }
\hline
\rowcolor{LightCyan} \hspace{37pt}{\bf Algorithm} &  {\bf Contention-Window Slots}  \\
\hline
\B & $\Theta(n\log n)$ \\
\hline
\LB & $\Theta\left( \frac{n\log n}{\log\log n} \right)$  \\
\hline
\LLB & $\Theta\left( \frac{n\log\log n}{\log\log\log n} \right)$  \\
\hline
\STB & $\Theta\left(n\right)$  \\
\hline
\end{tabular}
}\vspace{-0pt}\caption{Known guarantees on CW slots for a batch of $n$ packets for BEB, LB, LLB~\cite{BenderFaHe05} and STB~\cite{Gereb-GrausT92,GreenbergL85}.}\label{table:makespan}\vspace{-7pt}
\end{center}
\end{table}

%%%%%%%%%%%%%%%%%%%%%%%%%%%%%%%%%%%
%%%%%%%%%%%%%%%%%%%%%%%%%%%%%%%%%%%
%%%%%%%%%%%%%%%%%%%%%%%%%%%%%%%%%%%

\begin{figure*}[t]\vspace{-1.8cm}
\captionsetup[subfigure]{labelformat=empty}
\centering
\hspace{-0.7cm}\begin{subfigure}{0.25\textwidth} 
\includegraphics[width=1.2\textwidth]{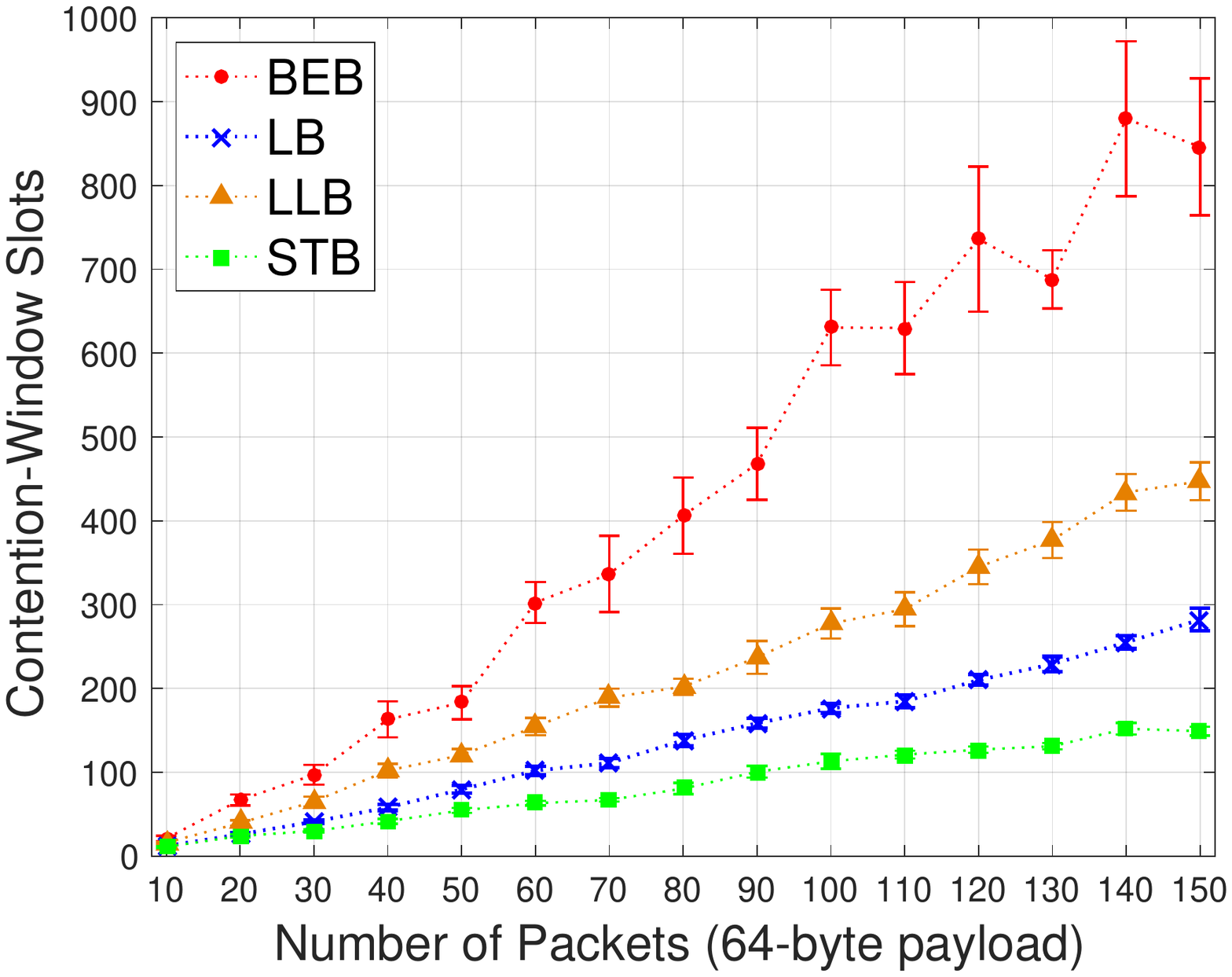} 
\vspace{-2.2cm}\caption{\hspace{29pt}(3)}
\end{subfigure}
\begin{subfigure}{0.25\textwidth} 
\includegraphics[width=1.2\textwidth]{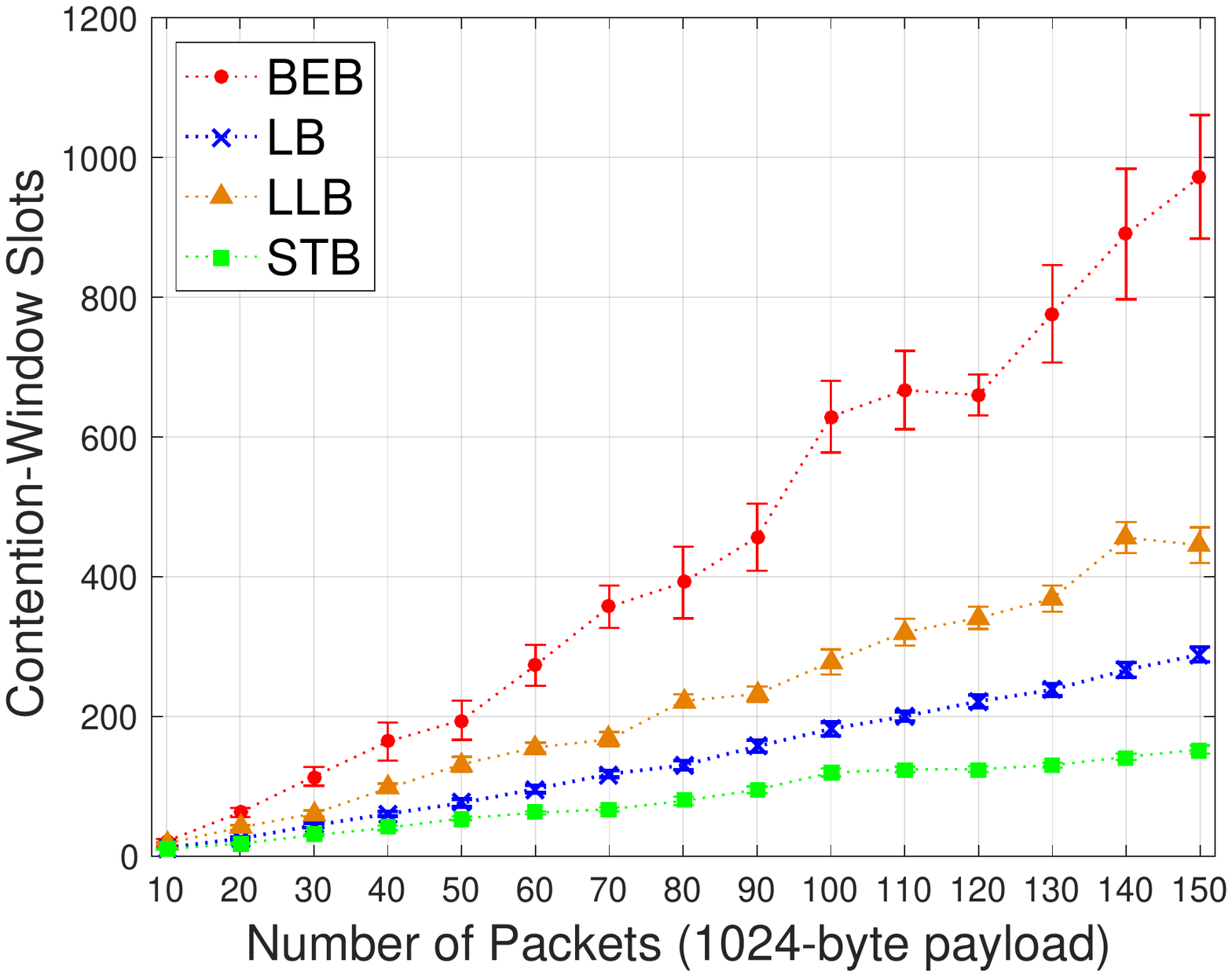} 
\vspace{-2.2cm}\caption{\hspace{29pt}(4)}
\end{subfigure}
\begin{subfigure}{0.25\textwidth}
\includegraphics[width=1.2\textwidth]{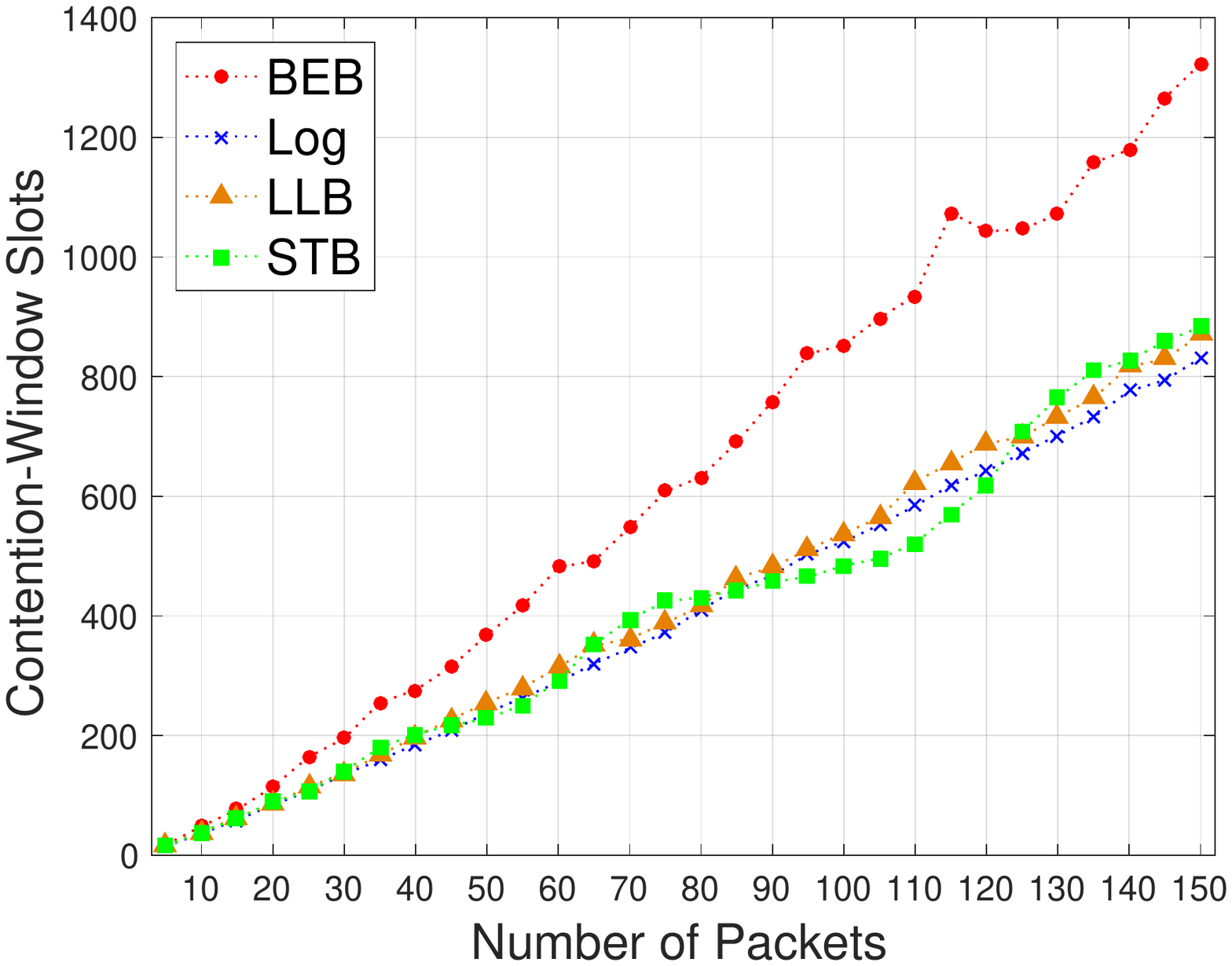}
\vspace{-2.2cm}\caption{\hspace{29pt}(5)}
\end{subfigure}
\begin{subfigure}{0.25\textwidth}
\includegraphics[width=1.2\textwidth]{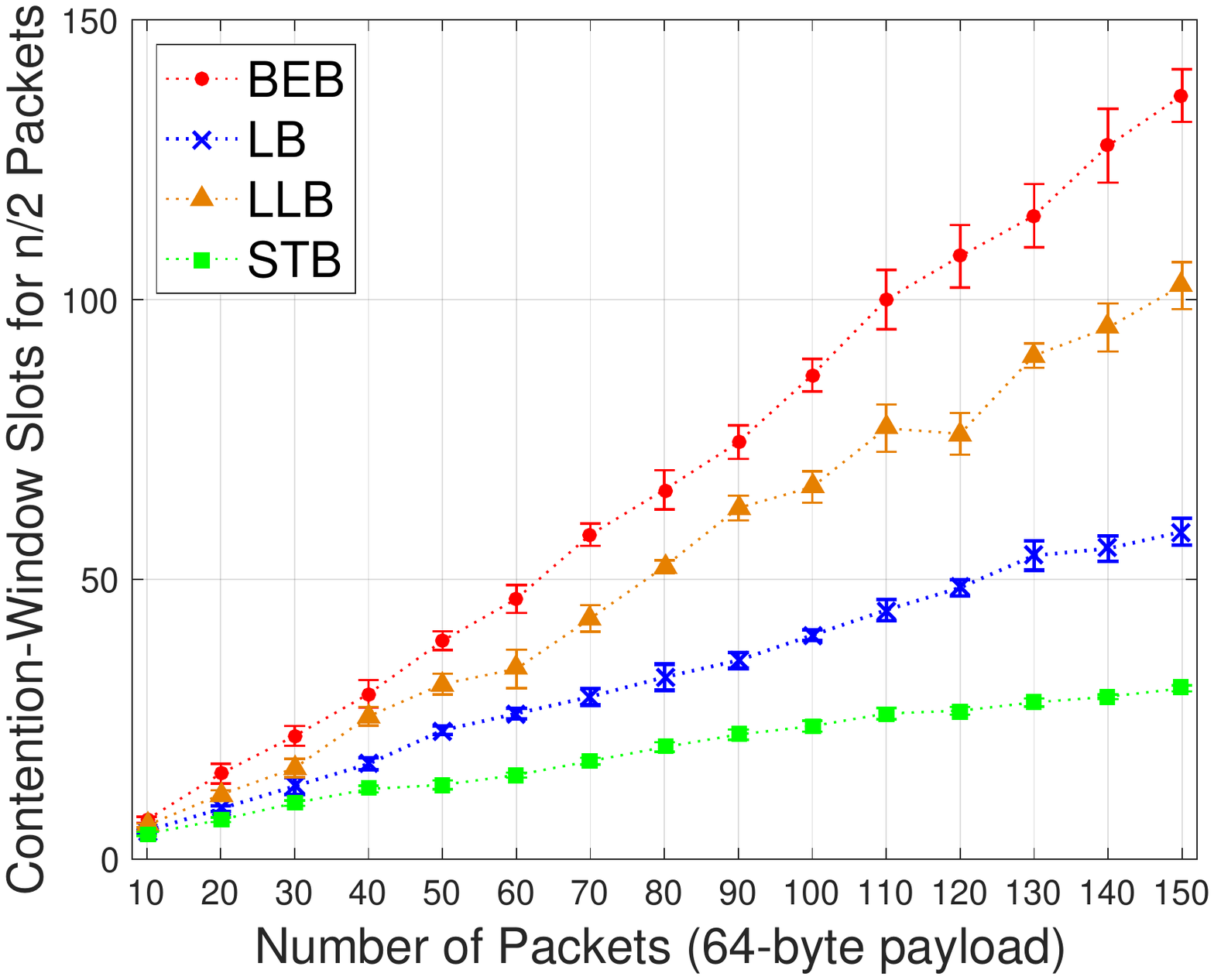} 
\vspace{-2.2cm}\caption{\hspace{29pt}(6)}
\end{subfigure}
\vspace{-4pt}\caption*{{\bf Figures 3-6.} Median values are reported: (3) and (4) CW slots from NS3 experiments with $30$ trials for each value of $n$ with $64$B and $1024$B payloads, respectively, (5) CW slots via Java simulation with $50$ trials for each value of $n$, (6) number of CW slots required to finish $n/2$ packets from NS3 experiments with $20$ trials for each value of $n$. Bars represent $95\%$ confidence intervals.} \label{fig:makespan}
\vspace{-10pt}
\end{figure*}

%%%%%%%%%%%%%%%%%%%%%%%%%%%%%%%%%%%
%%%%%%%%%%%%%%%%%%%%%%%%%%%%%%%%%%%
%%%%%%%%%%%%%%%%%%%%%%%%%%%%%%%%%%%

\noindent{\bf Our Metrics.} For a single batch of $n$ packets, algorithmic results address the number of slots required to complete all $n$ packets. These slots correspond only to those belonging to contention windows, even though many results refer to this as makespan. To avoid confusion, we will refer to this metric more explicitly by \defn{contention-window slots (CW slots)}.

Table~\ref{table:makespan} summarizes the known with-high-probability\footnote{With probability at least $1-1/n^c$ for a tunable constant $c>1$.} guarantees  on CW slots. Note that LB, LLB, and STB each have superior guarantees over BEB, with STB achieving $\Theta(n)$  CW slots which is asymptotically optimal. 

We also make use of a second metric. As described in Section~\ref{sec:802.11}, events occur outside of contention windows (such as SIFS, DIFS, full packet transmission, ACK timeouts). For the duration -- including the time spent in contention windows -- between when the single batch of packets arrives and when the last packet successfully transmits, we refer to \defn{total time}.%\footnote{We exclude the time required for a station to associate with the access point and ARP requests/replies. This would only (unfairly) strengthen the effects we observe.} 
\vspace{-5pt}

%%%%%%%%%%%%%%%%%%%%%%%%%%%%%%%%%%%
%%%%%%%%%%%%%%%%%%%%%%%%%%%%%%%%%%%
%%%%%%%%%%%%%%%%%%%%%%%%%%%%%%%%%%%

\subsection{Theory and Experiment}

We begin by comparing the number of CW slots. The algorithms we investigate are designed to reduce this quantity since all slots in the abstract model occur within some contention window. Under this metric,  LLB, LB, and STB are expected to outperform BEB. 

Throughout, when we report on performance,  we are referring to median values for $n=150$. Percentage increases or decreases are calculated as $100\times(A-B)/B$ where $B$ is always the value for BEB (the ``old'' algorithm) and $A$ corresponds to a value for one of LLB, LB, or STB (the ``new'' algorithms).\vspace{-3pt}

\subsubsection{Contention-Window Slots}

We provide results from our NS3 experiments using both small packets, with a $64$-byte (B)  payload, and large packets, with a $1024$B payload.  %For each value of $n$, we execute $30$ trials and plot the median value along with 95\% confidence intervals.

Figures~3 and 4 illustrate our experimental findings with respect to CW slots.\footnote{The following common approach is used to identify outliers in our data. Let $\Delta$ be the distance between the first and third quartiles. Any data point that falls outside a distance of $1.5 \Delta$ from the median is declared an outlier. We emphasize this results in very few points being discarded; for example, only a {\it single} $n$ value for our $64$B experiments had $5$ outliers (out of 30 trials), and the vast a majority had none.} The behavior generally agrees with theoretical predictions that each of LLB, LB, and STB should outperform BEB. 

Interestingly, LLB incurs a  greater number of CW slots than LB despite despite the former's better asymptotic guarantees. We suspect this is an artifact of hidden constants/scaling and evidence of this is presented later in Section~\ref{sec:large}. 

%; this is confirmed by the Java simulation too.  This behavior persists for $n\leq 1000$ and 

Nevertheless,  LLB, LB, and STB demonstrate improvements over BEB, giving a respective decrease of $49.4$\%, $68.2$\%, and $83.0$\%, respectively, with a $64$B payload. Similarly, LLB, LB, STB demonstrate a respective decrease of $54.2$\%, $69.9$\%,  $84.2$\% with a $1024$B payload.

For comparison, Figure~5 depicts CW slots derived from a simple Java simulation that implements only the assumptions of the abstract model (it ignores wireless effects, details in the protocol stack, etc.). Our NS3 results also roughly agree with this data in terms of magnitude of values and the separation of BEB from the other algorithms; albeit, the performances of LLB, LB, and STB do not separate cleanly in this data.

%What about the distribution of the number of slots required to succeed? 
Finally, Figure~6 presents the number of CW slots (for $64$B) required to complete half the packets, and we make two observations.  First, the remaining $n/2$ packets are responsible for the bulk of the CW slots. Second, the  improvement over~\B~decreases  to $25.0$\%, $56.4$\%,  and $77.7$\% for LLB, LB, and STB, respectively (and similarly for $1024$B).  This difference is due to ``straggling'' packets which survive until relatively large windows are reached. This impacts BEB more than other algorithms given its rapidly increasing window size (and, unlike STB, it does not have a ``backon'' component).\vspace{-8pt}

\begin{figure}[H]
\begin{center}
{
\fbox{\colorbox{light-gray}{
\begin{minipage}[h]{0.42\textwidth} 
\noindent{\bf Result 1.~}{\it Experiment confirms theoretical predictions that LLB, LB, and STB outperform BEB with respect to CW slots.}
\end{minipage}
          }
     }     
}
\end{center}
\vspace{-10pt}
 \end{figure}

\vspace{-8pt}
%%%%%%%%%%%%%%%%%%%%%%%%%%%%%%%%%%%
%%%%%%%%%%%%%%%%%%%%%%%%%%%%%%%%%%%
%%%%%%%%%%%%%%%%%%%%%%%%%%%%%%%%%%%

\begin{figure*}[t]\vspace{-2.4cm}
\captionsetup[subfigure]{labelformat=empty}
\centering
\hspace{-0.7cm}\begin{subfigure}{0.33\textwidth} 
\includegraphics[width=1.2\textwidth]{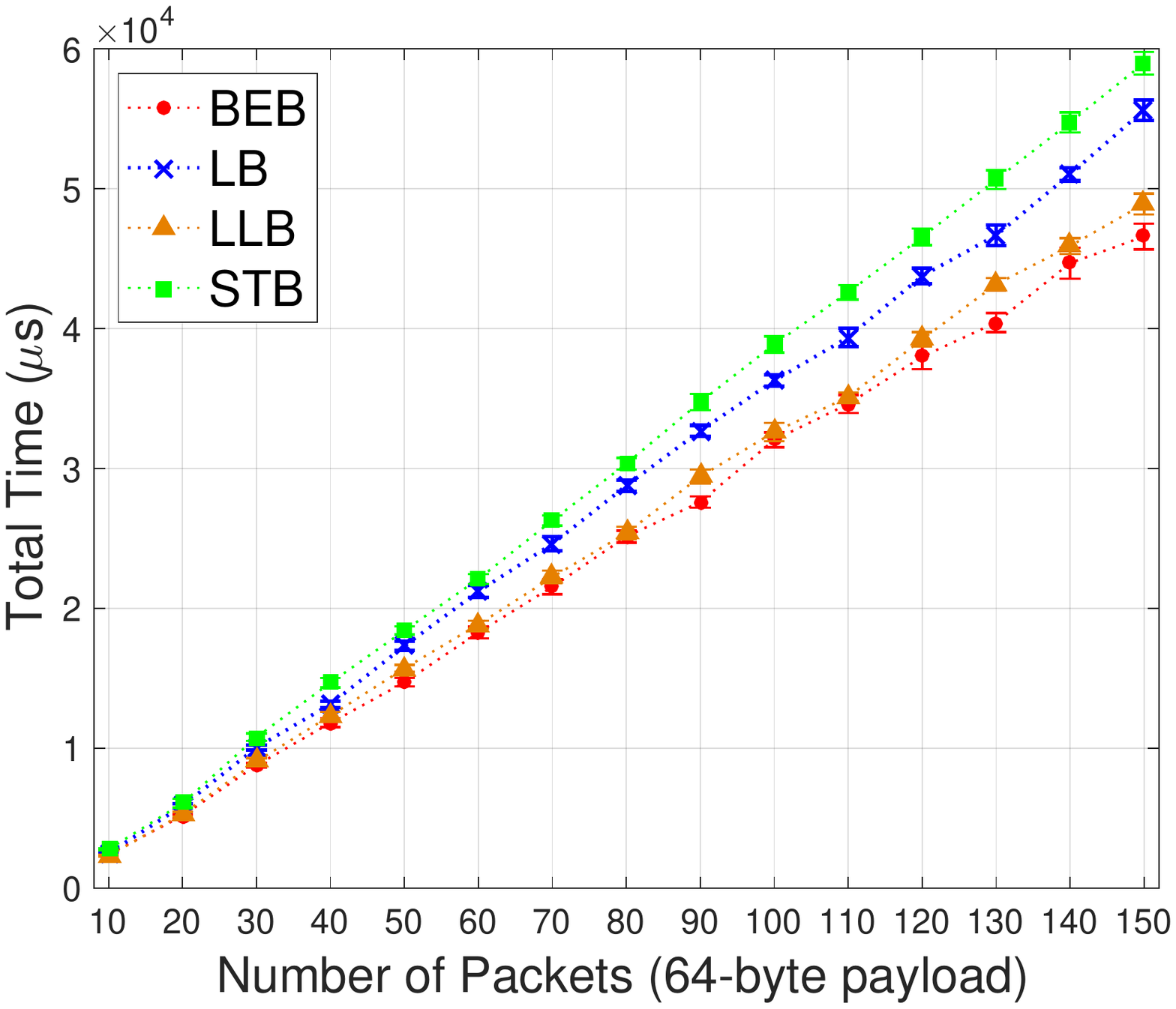} 
\vspace{-2.5cm}\caption{\hspace{35pt}(7)}
\end{subfigure}
\begin{subfigure}{0.33\textwidth} 
\includegraphics[width=1.2\textwidth]{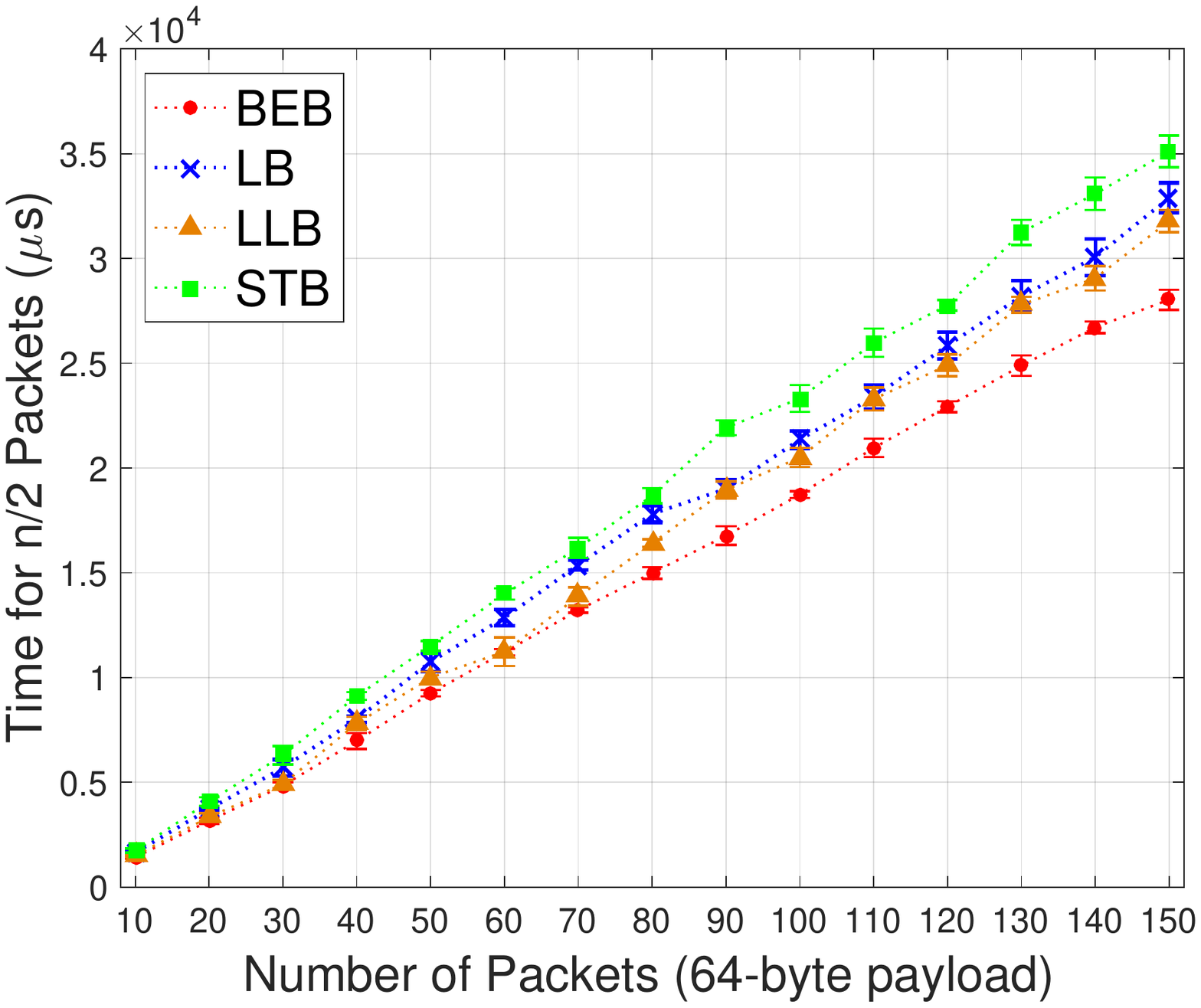} 
\vspace{-2.5cm}\caption{\hspace{35pt}(9)}
\end{subfigure}
\begin{subfigure}{0.33\textwidth}
\includegraphics[width=1.2\textwidth]{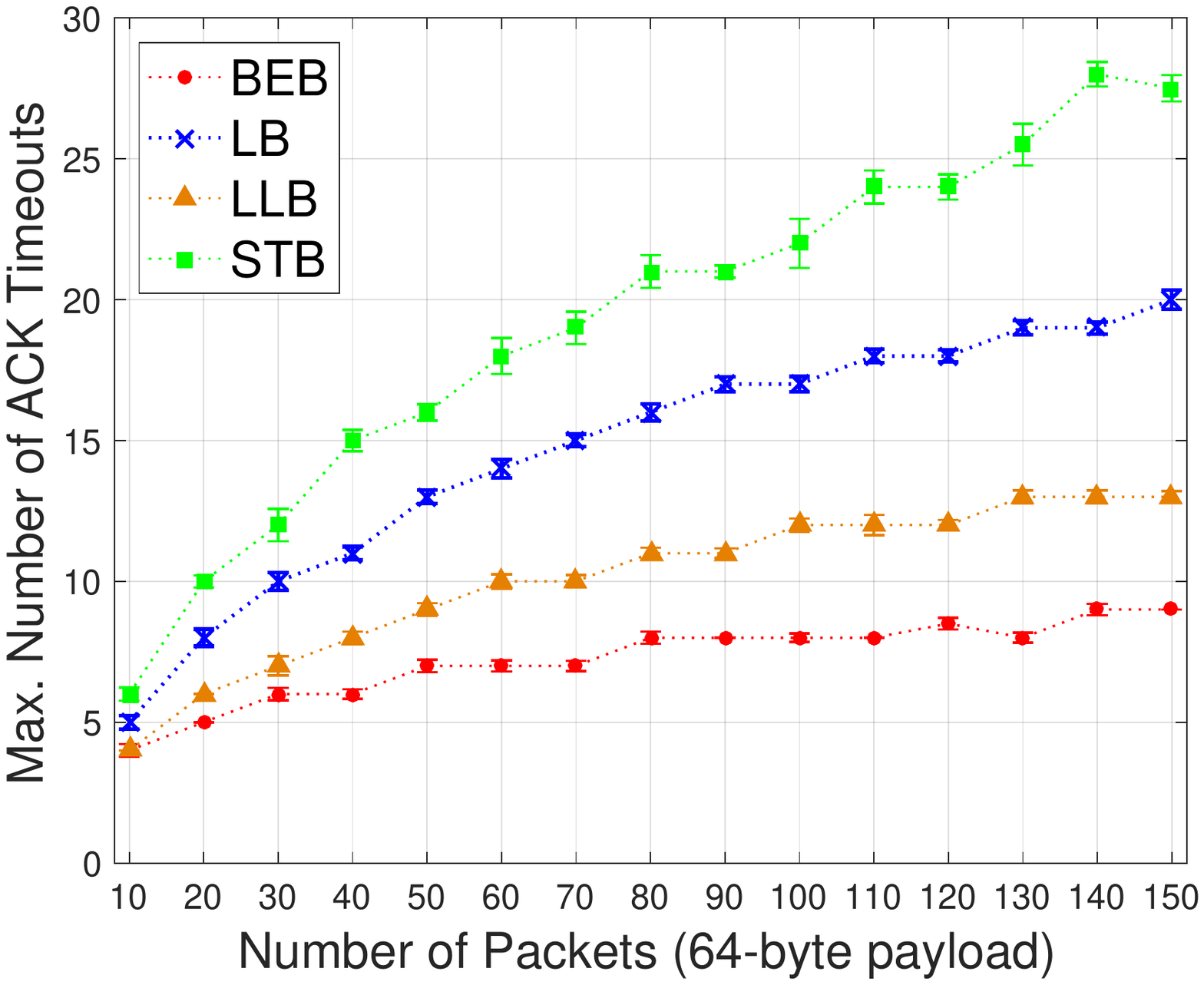}
\vspace{-2.5cm}\caption{\hspace{35pt}(11)}
\end{subfigure}
%\begin{subfigure}{0.25\textwidth}
%\includegraphics[width=1.2\textwidth]{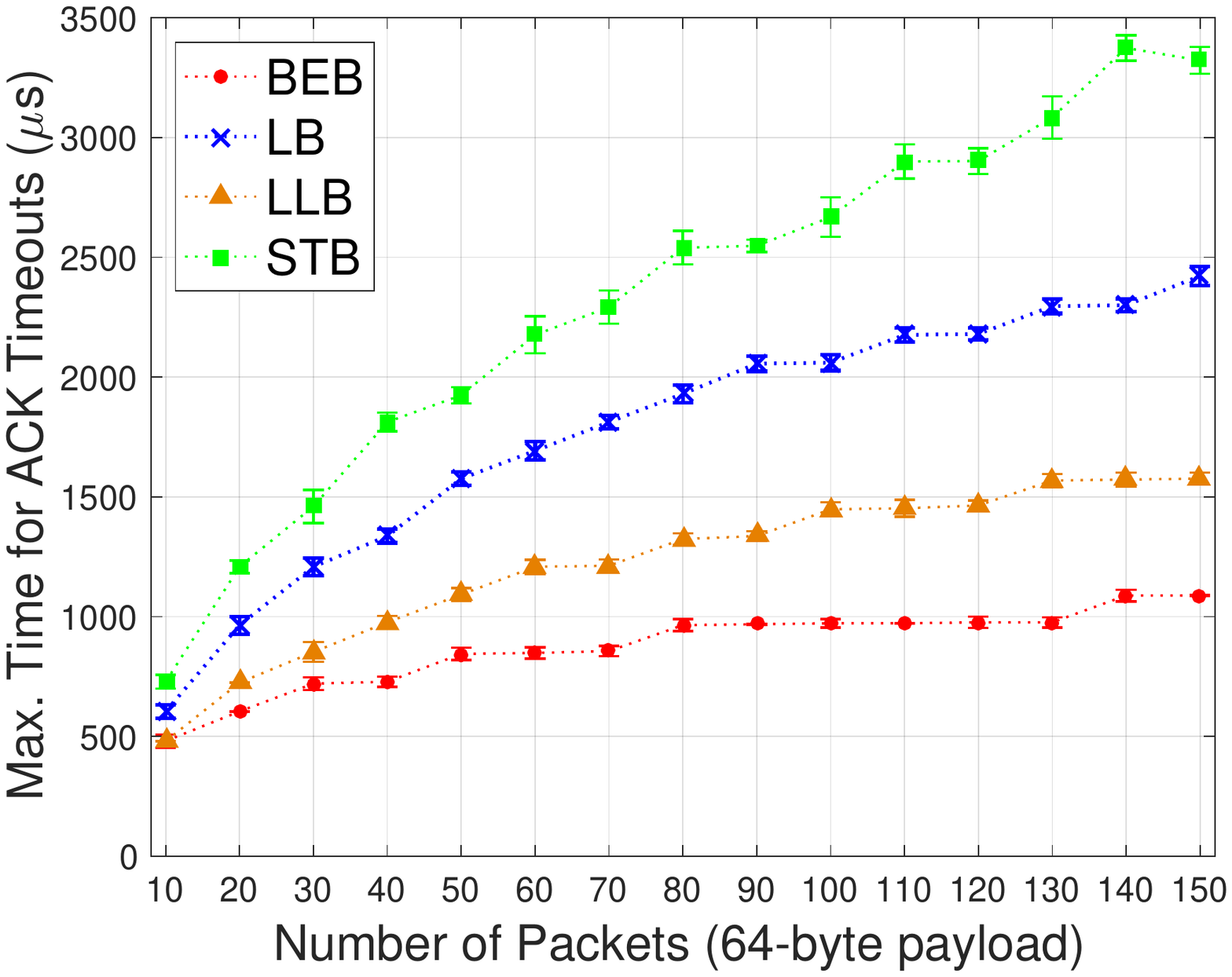} 
%\vspace{-2cm}\caption{\hspace{19pt}(10)}
%\end{subfigure} 
\vspace{-0.8cm}
\end{figure*}
%%%%%%%%%%%%%%%%%%%%%%

\begin{figure*}[h]\vspace{-1.8cm}
\captionsetup[subfigure]{labelformat=empty}
\centering
\hspace{-0.7cm}\begin{subfigure}{0.33\textwidth} 
\includegraphics[width=1.2\textwidth]{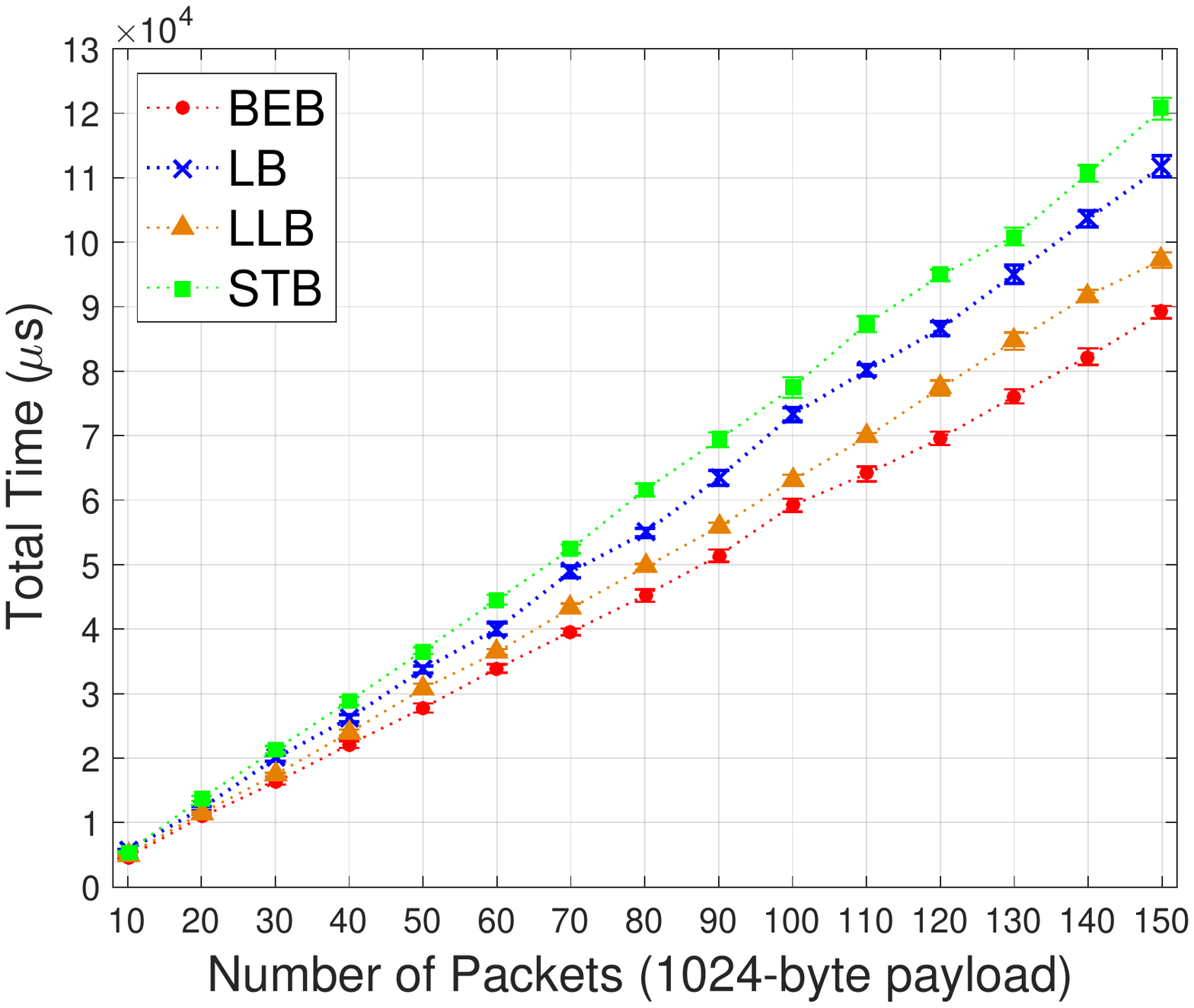} 
\vspace{-2.5cm}\caption{\hspace{35pt}(8)}
\end{subfigure}
\begin{subfigure}{0.33\textwidth} 
\includegraphics[width=1.2\textwidth]{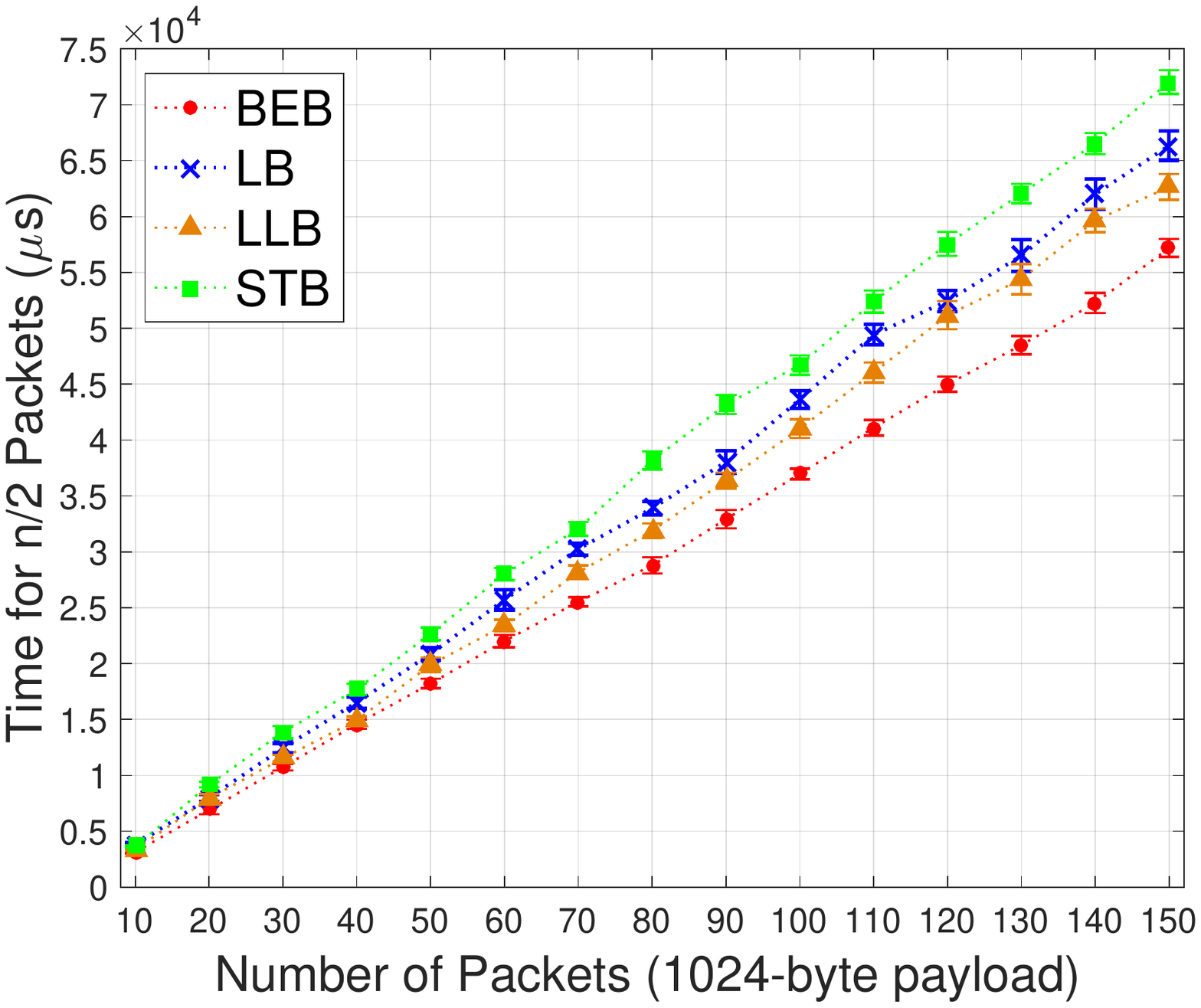} 
\vspace{-2.5cm}\caption{\hspace{35pt}(10)}
\end{subfigure}
\begin{subfigure}{0.33\textwidth}
\includegraphics[width=1.2\textwidth]{64B-cum-acktime.pdf}
\vspace{-2.5cm}\caption{\hspace{35pt}(12)}
\end{subfigure}
%\begin{subfigure}{0.25\textwidth}
%\includegraphics[width=1.2\textwidth]{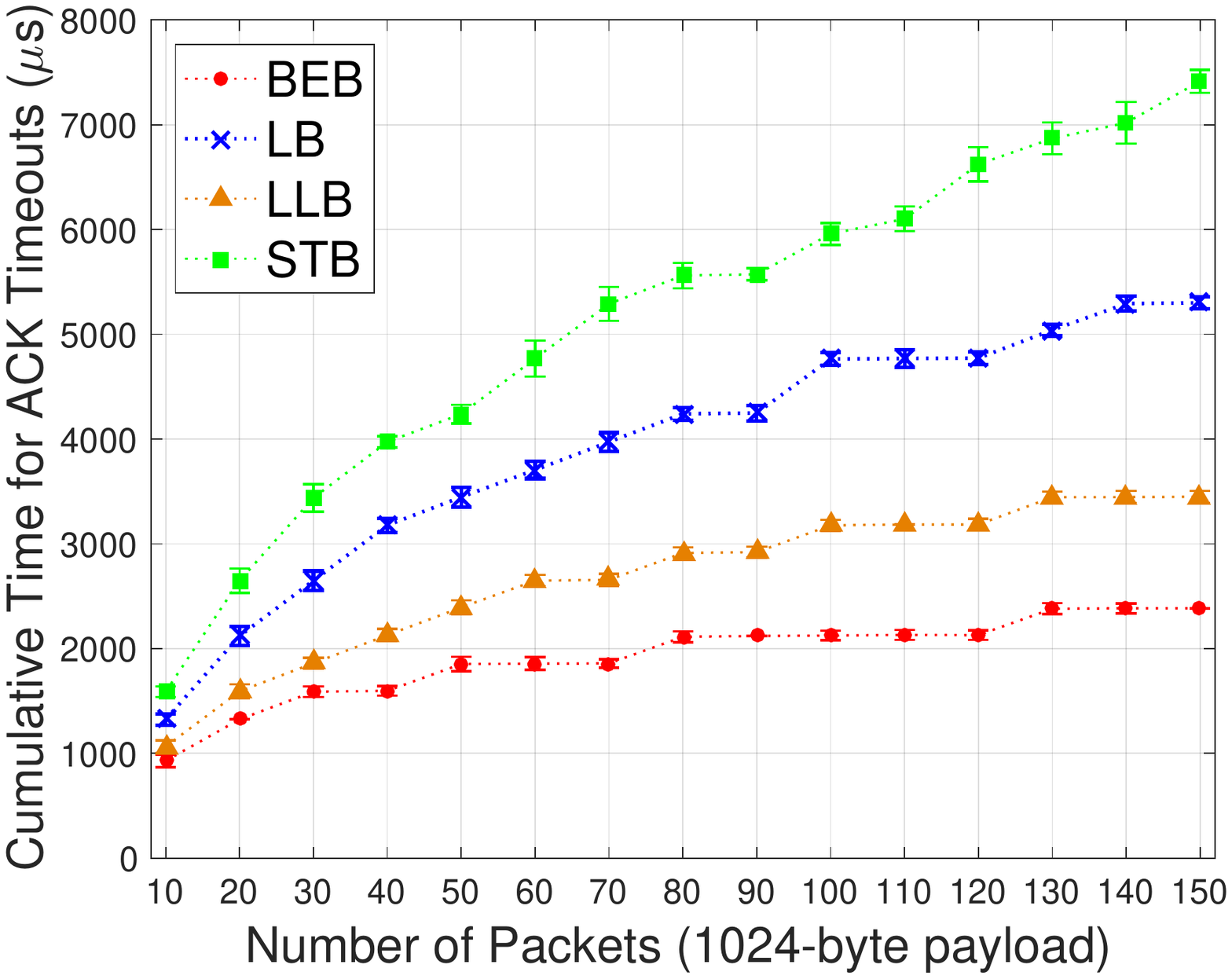} 
%\vspace{-2cm}\caption{\hspace{19pt}(14)}
%\end{subfigure} 
\vspace{-3pt}\caption*{{\bf Figures 7-12.} NS3 results with the median reported from $30$ trials for each value of $n$: (7) and (8) total time for $64$B and $1024$B payloads,  (9) and (10) time required to complete $n/2$ packets with a $64$B payload and $1024$B payloads, (11) the maximum number of ACK timeouts per station over all stations with a $64$B payload, (12) corresponding time spent waiting for ACK timeouts by the station with maximum ACK timeouts. Data plotted with $95\%$ confidence intervals.}\label{fig:total-time}
\end{figure*}

%%%%%%%%%%%%%%%%%%%%%%%%%%%%%%%%%%%
%%%%%%%%%%%%%%%%%%%%%%%%%%%%%%%%%%%
%%%%%%%%%%%%%%%%%%%%%%%%%%%%%%%%%%%

\subsubsection{Total Time}\label{sec:totaltime}

It is tempting to consider the single-batch scenario settled. However, if we focus on the total time for both the $64$B and $1024$B payload sizes, then a different picture emerges.

The degree to which these newer algorithms outperform~\B~is erased as seen from Figures 7 and 8. In fact, the order of performance is reversed with total time ordered from least to greatest as BEB,  LLB, LB,  STB. For $64$B payloads, LLB, LB, and STB suffer an increase of $5.6\%$, $19.3\%$, and $26.5\%$, respectively, over BEB. For $1024$B payloads, the increase is $9.1\%$, $25.4\%$, and $35.4\%$, respectively. Notably, the larger packet size seems to favor BEB. 

What about the time until $n/2$ packets are successfully transmitted? Perhaps newer algorithms do better for the bulk of packets, but suffer from a few stragglers? Interestingly, Figures~9 and 10 suggest that this is not the case. Indeed, for a $64$B payload, BEB performs even better over LLB, LB, and STB with the latter exhibiting an increase of $13.1\%$, $17.3\%$, $25.4\%$, respectively. Similarly, for $1024$B, the percent increase is $10.1\%$, $16.6\%$,  $26.6\%$, respectively.\vspace{-8pt}

%%%%%%%%%%%%%%%%%%%%%%%%%%%%%%%%%%%
%%%%%%%%%%%%%%%%%%%%%%%%%%%%%%%%%%%
%%%%%%%%%%%%%%%%%%%%%%%%%%%%%%%%%%%

\begin{figure}[H]
\begin{center}
{
\fbox{\colorbox{light-gray}{
\begin{minipage}[h]{0.42\textwidth} 
\noindent{\bf Result 2.~}{\it In comparison to BEB, the total time for each of LLB, LB, and STB is significantly worse.}
\end{minipage}
          }
     }     
}
\end{center}
\vspace{-10pt}
 \end{figure}

These findings are troubling since, arguably, total time is a more important performance metric in practice than CW slots. Critically, we note that this behavior is detected only through the use of NS3; it is not apparent from the simpler Java simulation. What is the cause of this phenomenon?

%%%%%%%%%%%%%%%%%%%%%%%%%%%%%%%%%%%
%%%%%%%%%%%%%%%%%%%%%%%%%%%%%%%%%%%
%%%%%%%%%%%%%%%%%%%%%%%%%%%%%%%%%%%

\subsection{The Cost of Collisions}\label{sec:hidden}

The number of ACK timeouts per station provides an important hint. As Figure~8 shows, the newer algorithms are incurring substantially more ACK timeouts which, in turn, corresponds to more (re)transmissions.

This evidence points to collisions as the main culprit. In particular, the way in which collision detection is performed means that each collision is costly in terms of time. In support of our claim, we decompose this delay into three portions using BEB (for $n=150$) as an example throughout:\smallskip

\noindent{\defn{(I) Transmission Time}.} (Re)transmissions are expensive. A packet of size $128$B ($64$B payload plus $64$B overhead) requires roughly $19 \mu$s plus the associated $20\mu$s preamble.   The maximum number of ACK timeouts for BEB -- and, thus, the number of collisions -- experienced by an unlucky station is $9$. 

Most collisions should involve only a handful of stations given the growth of CWs. (Why? Recall from Figure~6 that $n/2$ packets require the vast majority of CW slots to finish. These $n/2$ packets succeed only in larger windows --  of size roughly equal to or greater than $n$ for BEB -- and do not finish immediately due to collisions, each of which should involve only a few stations given such window sizes). If two stations are involved in each collision, this results in $75(9/2)$ non-overlapping (or \defn{disjoint}) collisions, for an aggregate  duration of roughly $75(9/2)(19\mu s + 20\mu s) = 13,163\mu s$.\footnote{We do not add the time for the final/successful transmissions (this would only increase the value); our focus is on the transmissions associated with collisions.} \smallskip  

\noindent{\defn{(II) ACK Timeouts}.} Given a collision, the AP fails to obtain the transmission and the corresponding stations incur an ACK timeout before concluding that a collision occurred. This delay is significant -- roughly $1,100\mu$s for BEB with $n=150$ (see Figure 12) -- but an order-of-magnitude less than the transmission time. \smallskip

\noindent{\defn{(III) Contention-Window Slots}.} BEB incurs $886$ CW slots for $n=150$, each of duration $9\mu$s, spent in CWs which yields $7,974\mu s$. \smallskip

For BEB at $n=150$, these three values yield a very conservative lower bound on the total time of $22,237\mu s$; for instance, we have not accounted for the SIFS and DIFS. This back-of-the-envelope calculation conforms to the magnitude of values observed in Figure~7, and it highlights two important facts. First, both transmission time and CW slots contribute significantly to total time, with ACK timeouts being a distant third. Second, collisions greatly impact total time -- far more than CW slots -- by forcing retransmissions.

Underlining this second point, we note that the transmission time for the $1024$B payload is larger at roughly $75(9/2)(161\mu s + 20\mu s) = 61,088\mu s$.\footnote{The number of ACK timeouts for  $1024$B is roughly the same, even though packet size has increased. This aligns with our findings in Section 4.} By comparison, the CW slots contribute roughly $973\times 9\mu s = 8,757\mu s$. \smallskip

\noindent{\bf ACK Timeout \boldmath{$\approx$} Collision.} It is true that not all ACK timeouts necessarily imply that the corresponding packet suffered a collision. For example, an ACK might be lost due to wireless effects even if the packet was transmitted without any collision. Note that, in such a case, the sending station still diagnoses that a collision has occurred and so the same costs described in (I)-(III) hold. 

For our simple NS3 setup, virtually all ACK failures result from a collision. This is evident from Figure~13 which illustrates a trial with $n=20$ under BEB using a $64$B payload. Collisions occur only when two or more stations transmit (duration of transmission denoted by a thick blue line) at the same time and the result is an ACK timeout event (indicated by a thin red line); in all other cases, the transmission is successful and the corresponding ACK is received.\smallskip

\begin{figure}[t]
\vspace{-3.4cm}
\begin{center}
\mbox{\hspace{-1cm}}\includegraphics[width=0.55\textwidth]{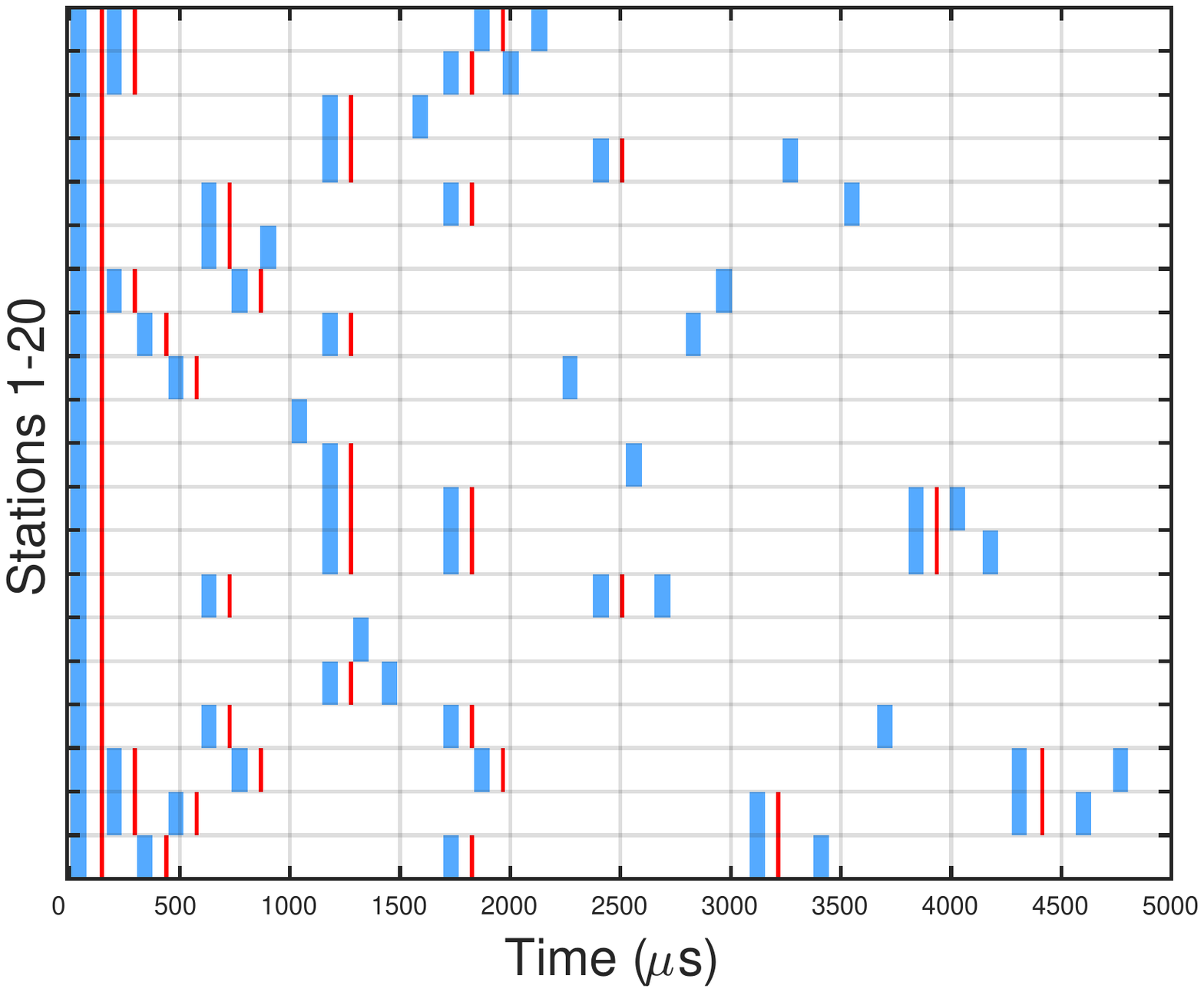}
\vspace{-3.6cm}
\caption*{ {\bf Figure 13.} Execution of BEB with $20$ stations.}
\end{center}
\vspace{-0.5cm}
\end{figure}

\noindent{\bf Disjoint Collisions.} We observe that the total time does not grow linearly with the maximum number of ACK timeouts -- equivalently, collisions -- experienced by a station. Under LB, an unlucky station suffers roughly twice the number of ACK timeouts, but the total time of LB is not twice that of BEB. 

Why? Consider $n$ stations where each collision involves only two stations. Then, there are $n/2$ disjoint collisions and each is added to the total time.  In contrast, consider the opposite extreme where all $n$ stations transmit at the same time and collide. Then, there is one collision which adds only a single failed transmission time to the total time.

The number of stations involved in a single collision is larger for algorithms whose CWs grow more slowly, such as LB and LLB.  For STB, a similar phenomenon is at work; the backon component yields collisions involving many stations. That is, LB, LLB, and STB are closer to the second case, while BEB is closer to the first.
 
From our experimental results, we see assumption A2 is not accurate with regards to the cost of failure:\vspace{-10pt}
\begin{figure}[H]
\begin{center}
{
\fbox{\colorbox{light-gray}{
\begin{minipage}[h]{0.42\textwidth} 
\noindent{\bf Result 3.~} {\it The impact from collision detection is not properly accounted for by A2. This impact is primarily a function of:\vspace{-0pt}
\begin{itemize} 
\item transmission time, and \vspace{-0pt}
\item time spent in contention windows,\vspace{-0pt}
\end{itemize}
\noindent with the former dominating.}% at this scale.}
%while delay from ACK timeouts is secondary.
\end{minipage}
          }
     }     
}
\end{center}
\vspace{-15pt}
 \end{figure}

\noindent{\bf RTS/CTS.} Although it is not examined in detail in our work, we briefly remark on the use of  RTS/CTS. When enabled, stations can experience collisions among the RTS frames (instead of the packets). These are smaller in size ($20$B), but the remainder of the total-time calculation remains the same, and additional time is incurred due to additional inter-frame spaces and the transmission of CTS frames. 

For very large packets, we may expect RTS/CTS to mitigate the transmission-time cost, but not for small to medium-sized packets where the overhead from this mechanism might even cause worse performance. Ultimately, we observe the same qualitative behavior when RTS/CTS is enabled. For example, without RTS/CTS, recall from Section~\ref{sec:totaltime} that the total time for LLB (BEB's closest competitor) increases by $5.6$\% and $9.1$\% for the $64$B and $1024$B, respectively, over BEB. With RTS/CTS, the  increases are $10.7$\% and $7.5$\%.

%%%%%%%%%%%%%%%%%%%%%%%%%%%%%%%%%%
%%%%%%%%%%%%%%%%%%%%%%%%%%%%%%%%%%
%%%%%%%%%%%%%%%%%%%%%%%%%%%%%%%%%%

\subsubsection{Backing Off Slowly is Bad}

The reason for the discrepancy between theory and experiment is now apparent. LLB increases each successive contention window by a smaller amount than BEB; in other words, LLB is {\it backing off more slowly}; the same is true of LB. Informally, this slower-backoff behavior is the reason behind the superior number of CW slots for LB and LLB since they linger in CWs where the contention is ``just right'' for a significant fraction of the packets to succeed. However, backing off slowly also inflicts a greater number of collisions.

Note that BEB backs off faster, jumping away from such favorable contention windows and thus incurring many empty slots. This is undesirable from the perspective of optimizing the number of CW slots. However, the result is fewer collisions. Given the empirical results, this appears to be a favorable tradeoff.

We explicitly note that LLB backs off faster than LB. In this way, LLB is ``closer'' to BEB and, therefore, is not outperformed as badly as illustrated in Figures 7 and 8.

\begin{figure}[H]
\begin{center}
{
\fbox{\colorbox{light-gray}{
\begin{minipage}[h]{0.42\textwidth} 
\noindent{\bf Result 4.~} {\it For algorithm design, optimizing CW slots at the expense of increased collisions is a poor design choice.}
\end{minipage}
          }
     }     
}
\end{center}
\vspace{-8pt}
 \end{figure}

Can we quantify the tradeoff between CW slots and collisions? From our discussion in Section~\ref{sec:hidden}, the total time for an algorithm $A$, denoted by $\mathcal{T}_A$, is approximated as: \vspace{-6pt}

$$\mathcal{T}_A = \mathcal{C}_A\cdot (P + \rho)  +  \mathcal{W}_A\cdot{s} $$

\noindent where  $\mathcal{C}_A$ is the number of disjoint collisions, $P$ is the transmission time for a packet, $\rho$ is the preamble duration, $\mathcal{W}_A$ is the corresponding number of CW slots, and $s$ is the duration of a slot.

Abstracting further, we may treat $\rho$ and $s$ as constants to get:\vspace{-8pt} %.\footnote{Recall that the abstract model typically assumes a packet fits within a slot, but from Section~\ref{sec:802.11}, we know that a slot is only the beginning of the data transmission.} This yields:\vspace{-15pt}

$$\mathcal{T}_A =  \Theta\left(\mathcal{C}_A\cdot P  +  \mathcal{W}_A\right)$$

In other words, total time depends on the number of disjoint collisions (which depends on $n$) --- each of which has a severity that depends on $P$ --- and the number of CW slots (which depends on $n$). 

How does $P$ behave? We assume it is proportional to packet size. For small values of $n$, it seems reasonable to consider $P=\Theta(1)$.  However, if we are interested in the asymptotic behavior of $\mathcal{T}_A$, then $P$ should not be treated as a constant. Arguably, as $n$ scales, the number of bits required to address devices must also increase, and it is not uncommon to assume $P$ scales as the logarithm of $n$.  

Previous results have already established $\mathcal{W}_A$, so the  parameter of interest is $\mathcal{C}_A$, which we investigate next.\vspace{-3pt}

%%%%%%%%%%%%%%%%%%%%%%%%%%%%%%%%%%%
%%%%%%%%%%%%%%%%%%%%%%%%%%%%%%%%%%%
%%%%%%%%%%%%%%%%%%%%%%%%%%%%%%%%%%%

\section{Bounds on Collisions}\label{sec:theory}

In order to provide additional support for our empirical findings, we derive asymptotic bounds on $\mathcal{C}_A$.  In comparison to BEB, we demonstrate that STB is asymptotically equal while both LLB and LB suffer from asymptotically more disjoint collisions.  

Our arguments are couched in terms of packets and slots, but what follows is a balls-into-bins analysis. To bound $\mathcal{C}_A$, we are interested in the number of bins (where bins make up the slots in a CW) that contain two or more balls; this is a disjoint collision (or just a {\it collision}). The second column of Table~3 presents our results.

\subsection{Upper Bounding Collisions in BEB}

\begin{claim}
For a single batch of $n$ packets, with high probability the number of collisions for \B~is $O(n)$.\vspace{-5pt}
\end{claim}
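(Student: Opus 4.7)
The plan is to partition BEB's execution into two phases by window size and bound collisions in each phase separately. Let $W_i = 2^i$ denote the size of the $i$-th window and let $n_i \leq n$ be the (random) number of packets participating in window $i$. Set the threshold $i^* = \lceil \log(4n) \rceil$, so that windows $i < i^*$ satisfy $W_i \leq 4n$ and windows $i \geq i^*$ satisfy $W_i > 4n$.

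For windows $i < i^*$ (``small'' windows), the number of collision bins is trivially at most $W_i$, since each bin is either empty, a success, or a collision. The geometric series $\sum_{i<i^*} W_i \leq 2 \cdot 2^{i^*} \leq 16n$ then bounds the total Phase-1 collisions by $O(n)$ deterministically, requiring no probabilistic reasoning at all.

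For windows $i \geq i^*$ (``large'' windows), a standard balls-into-bins computation gives $\mathbb{E}[C_i \mid n_i] \leq \binom{n_i}{2}/W_i \leq n^2/(2W_i)$, and the geometric sum $\sum_{i \geq i^*} n^2/(2 W_i) \leq n^2/2^{i^*} = O(n)$ bounds the expected Phase-2 contribution. To lift this from an expectation bound to a high-probability statement, I would use that the indicators ``bin $j$ has $\geq 2$ balls'' are negatively associated once $n_i$ is fixed, so a Chernoff-style bound on sums of negatively associated Bernoullis yields $C_i \leq \mathbb{E}[C_i \mid n_i] + O\bigl(\sqrt{\mathbb{E}[C_i \mid n_i]\log n} + \log n\bigr)$ with probability at least $1 - 1/n^{c+2}$. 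Since BEB's known $O(n \log n)$ CW-slot bound~\cite{BenderFaHe05} implies the algorithm terminates within $O(\log n)$ windows w.h.p., a union bound over these windows, combined with a Cauchy--Schwarz step on the $\sqrt{\cdot}$ terms, controls the aggregate deviation by $o(n)$.

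The main obstacle I expect is the concentration step in Phase 2, because $n_i$ is itself a random variable depending on the history of earlier windows, so the usual Chernoff argument does not apply globally. I plan to sidestep this by conditioning on $n_i$ and invoking only the deterministic bound $n_i \leq n$ inside the Chernoff inequality; the dependencies across windows are then absorbed by a single union bound over the $O(\log n)$ active windows rather than by trying to prove one global concentration inequality. Combining the deterministic $O(n)$ bound from Phase 1 with the $O(n)$ high-probability bound from Phase 2 yields the claim.
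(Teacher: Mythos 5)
Your proposal is correct and follows essentially the same route as the paper's proof: a trivial geometric-series bound of $O(n)$ on collisions in windows of size up to $\Theta(n)$, a balls-into-bins expectation bound that decays geometrically across the larger windows (pessimistically taking $n_i \leq n$), a per-window concentration step (you use negative association and Chernoff where the paper invokes the method of bounded differences), and a union bound over the $O(\log n)$ windows guaranteed by the makespan bound of Bender et al. The differences are only in the choice of concentration tool and the exact threshold, not in the structure of the argument.
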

\begin{proof}
In the execution of~\B, consider a contention window of size $n 2^i$ for an integer $i\geq 0$; let the windows be indexed by $i$. Note that up to window $i=0$, we have $O(n)$ collisions since there are $O(n)$ slots by the sum of a geometric series. 

Let the indicator random variable $X_j=1$ if slot $j$ in window $i$  is a collision; $X_j=0$ otherwise. We map this to a balls-and-bins problem, where a ball corresponds to a packet and a bin corresponds to a slot in a contention window. %Since the number of bins is at least the number of balls, for window $i$:\vspace{-3pt}
\begin{eqnarray*}
Pr[X_j=1]  & = &  O\left( \binom{n }{2}\hspace{0pt}\left(\frac{1}{n2^i}\right)^2\hspace{-3pt} \left(1- \frac{1}{n2^i}\right)^{n-2}\right)\\
& = & O\left(\frac{1}{2^{2i}}\right)
\end{eqnarray*}
Pessimistically, assume $n$ balls are dropped in each consecutive window $i$; in actuality, packets finish over these windows and reduce the probability of collisions. Let $L_i=\sum_{j=1}^{n2^i} X_j$ be the number of collisions in window $i$. By linearity of expectation:\vspace{-4pt}
$$E[L_i] = \sum_{j=1}^{n2^i} E\left[X_j\right] = O\left(\frac{n}{2^{i}}\right)$$
\noindent Using the method of bounded differences~\cite{dubhashi:concentration}, w.h.p. $L_i$ is tightly bounded to its expectation. By~\cite{BenderFaHe05}, w.h.p. \B~finishes within $m=O(\lg\lg n)$ windows  and so $\mathcal{C}_{BEB} = \sum_{i=0}^{m} L_i = O(n/2^i) = O(n)$. 
\end{proof}

 %and so the smallest summand is $L_{m} = O(n/\texttt{polylog}(n))$. It follows that 

%%%%%%%%%%%%%%%%%%%%%%%%%%%%%%%%%%%
%%%%%%%%%%%%%%%%%%%%%%%%%%%%%%%%%%%
%%%%%%%%%%%%%%%%%%%%%%%%%%%%%%%%%%%

\subsection{Lower Bounding Collisions in LLB and LB}
 
The specification of LLB analyzed here is slightly different from the description in Section~\ref{sec:single-batch}; the contention window size doubles, but each such window of size $w$ is repeated for $\lg\lg w$ iterations. With respect to CW slots and disjoint collisions, this is asymptotically equivalent to the strictly monotonic version described earlier~\cite{BenderFaHe05}. 

The window size of interest is $\Theta(n/\lg\lg\lg n)$, since LLB finishes within a window of this magnitude~\cite{BenderFaHe05}. We first prove an upper bound of $o(n)$ successes  in a single execution of this window. This allows us to claim $\Omega(\lg\lg n)$ iterations exist where $\Theta(n)$ packets remain unfinished. Next, we prove that for each such iteration, $\Omega(n/\lg\lg\lg n)$ collisions occur yielding a total of $\Omega(n\lg\lg n/\lg\lg\lg n)$ collisions. \vspace{-2pt}

\begin{lemma}\label{lem:LLB-succ}
Assume $\epsilon n$ packets and a CW of size  $cn/\lg\lg\lg n$ for a sufficiently large constant $\epsilon\leq 1$ and sufficiently small constant $c>0$. With high probability, at most $O(n/(\lg\lg n)^{d})$ packets succeed in the CW for a constant $d>1$ depending on $\epsilon$ and $c$.\vspace{-3pt}
\end{lemma}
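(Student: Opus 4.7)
The plan is to call a slot in the window \emph{successful} when exactly one of the $\epsilon n$ packets lands in it, upper bound the expected number of successful slots, and then concentrate via bounded differences exactly as in Claim~1.

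First, I would fix a slot and write down the probability it is successful. With $\epsilon n$ packets thrown independently and uniformly into a window of size $W = cn/\lg\lg\lg n$, a standard calculation gives
\[
p \;=\; \epsilon n \cdot \frac{1}{W}\left(1-\frac{1}{W}\right)^{\epsilon n - 1} \;=\; \Theta\!\left(\frac{\epsilon n}{W}\right) e^{-\epsilon n/W}.
\]
Since $\epsilon n / W = (\epsilon/c)\lg\lg\lg n$ and $e^{-\alpha\lg\lg\lg n} = (\lg\lg n)^{-\alpha/\ln 2}$, letting $S$ denote the number of successful slots we obtain
\[
E[S] \;=\; W p \;=\; \Theta\!\left(n\cdot (\lg\lg n)^{-\epsilon/(c\ln 2)}\right).
\]
Choosing $c$ sufficiently small relative to $\epsilon$ makes the exponent $d := \epsilon/(c\ln 2)$ exceed any prescribed constant larger than $1$, so $E[S] = O(n/(\lg\lg n)^d)$.

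Second, I would apply the method of bounded differences (McDiarmid's inequality) to $S$ viewed as a function of the $\epsilon n$ independent slot choices, as already done in the proof of Claim~1. Relocating a single packet alters the occupancy of exactly two slots (each by $\pm 1$), so the number of successful slots changes by at most $2$. McDiarmid then yields a deviation tail of $\exp\!\left(-t^2/(2\epsilon n)\right)$; taking $t = E[S]$ drives this to $n^{-\omega(1)}$, so $S = O(n/(\lg\lg n)^d)$ with high probability.

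The main obstacle is the tuning of constants and the interface with the broader argument. The constant $d$ must be chosen large enough that, when this lemma is composed with the $\Omega(\lg\lg n)$ repetitions of the window and the claim that $\Theta(n)$ packets remain unfinished, the downstream count actually reaches the target $\Omega(n\lg\lg n/\lg\lg\lg n)$ collisions. One must verify that for any fixed lower bound on $\epsilon$ the constant $c$ can always be taken small enough to push $d$ past this threshold. A minor subtlety is that in later iterations of the repeated window some packets have already succeeded and are gone, but since this only reduces the number of balls in play it only shrinks $E[S]$, so a single instance of the lemma (with a slightly smaller effective $\epsilon$) covers every iteration.
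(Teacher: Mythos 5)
Your proposal is correct and follows essentially the same route as the paper: bound the per-slot probability of a unique transmission by $\Theta(\epsilon n/W)e^{-\epsilon n/W}=O\left(\lg\lg\lg n\cdot(\lg\lg n)^{-\epsilon\lg(e)/c}\right)$, sum over the $cn/\lg\lg\lg n$ slots to get $E[S]=O\left(n/(\lg\lg n)^{d}\right)$ with $d=\epsilon\lg(e)/c>1$ for $c$ small, and concentrate via the method of bounded differences. The only difference is cosmetic: you spell out the Lipschitz constant and the McDiarmid tail, which the paper leaves implicit.
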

\begin{proof}
Let $Y_j=1$ if a packet succeeds in slot $j$, otherwise $Y_j=0$. We have: \vspace{-5pt}
\begin{eqnarray*}
Pr[Y_j=1] & = & \binom{\epsilon n}{1}\left(\frac{\lg\lg\lg n}{cn}\right) \hspace{-3pt} \left(1- \frac{\lg\lg\lg n}{cn}\right)^{\epsilon n-1}\\
&\leq & \frac{\epsilon\lg\lg\lg n}{c(\lg\lg n)^{\epsilon\lg(e)/c}} \leq O\left(\frac{1}{c(\lg\lg n)^{d}}\right) 
\end{eqnarray*}
\noindent for a constant $d>1$ where the last line follows from noting that $\epsilon\lg(e)/c > 1$ for a sufficiently large constant $\epsilon$ and a sufficiently small constant $c>0$.  Let $Y=\sum_j Y_j$, then:\vspace{-5pt}
$$E[Y] = O\left(\frac{n}{(\lg\lg n)^{d}}\right)$$
\noindent By the method of bounded differences, w.h.p. this is tight.
\end{proof}

\begin{claim}\label{claim:LLB}
For a single batch of $n$ packets, with high probability~\LLB~experiences $\Omega\left(\frac{n\lg\lg n}{\lg\lg\lg n}\right)$ collisions.\vspace{-5pt}
\end{claim}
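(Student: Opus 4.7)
The plan is to isolate the CW of size $w = cn/\lg\lg\lg n$ at which \LLB~finishes and exploit that, under the specification analyzed, this window is repeated for $\Theta(\lg\lg w) = \Theta(\lg\lg n)$ iterations. Lemma~\ref{lem:LLB-succ} already caps the number of successes per iteration at $O(n/(\lg\lg n)^d)$ for some constant $d>1$; I would use this to argue that packets leave the system very slowly during the window, and then charge $\Omega(n/\lg\lg\lg n)$ collisions to each iteration, multiplying the two bounds at the end.

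First, I would verify that at the start of \emph{every} iteration of the window of size $w$, at least $\epsilon n$ packets remain unfinished, so that Lemma~\ref{lem:LLB-succ}'s precondition is in force. For any earlier (smaller) window of size $w' < w$, the per-slot load $n/w' = \omega(\lg\lg\lg n)$ drives the factor $(1-1/w')^{\Theta(n)-1}$ in the per-slot success probability to be exponentially small, so the expected total successes summed across all windows of size $< w$ is $o(n)$; a Chernoff/bounded-differences bound then gives $\geq (1-o(1))n$ packets remaining at the start of window $w$ w.h.p. Inside window $w$, inductively applying Lemma~\ref{lem:LLB-succ} across iterations and union-bounding the $\Theta(\lg\lg n)$ error events shows that the cumulative successes within this window are at most $O(n\lg\lg n/(\lg\lg n)^d) = o(n)$, so the precondition persists throughout.

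Next, I would lower bound the collisions per iteration. Conditioned on at least $\epsilon n$ packets competing over the $w$ slots, the expected per-slot occupancy is $\Theta(\lg\lg\lg n) \geq 2$ (for $n$ large enough), so a direct balls-and-bins computation gives
\[
\Pr[\text{slot } j \text{ is a collision}] \;=\; 1 - \left(1 - \tfrac{1}{w}\right)^{\epsilon n} - \epsilon n\cdot\tfrac{1}{w}\left(1 - \tfrac{1}{w}\right)^{\epsilon n - 1} \;=\; \Omega(1).
\]
Summing over the $w$ slots, the expected number of collisions per iteration is $\Omega(w) = \Omega(n/\lg\lg\lg n)$. Exactly as in Claim~1, the method of bounded differences concentrates this sum to within a constant factor of its expectation w.h.p., since altering one packet's slot choice flips the collision indicator in at most two slots. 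Summing over the $\Omega(\lg\lg n)$ iterations and union-bounding the $O(\lg\lg n)$ high-probability events from both steps delivers $\mathcal{C}_{LLB} = \Omega(n\lg\lg n/\lg\lg\lg n)$.

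The main obstacle is the inductive conditioning in the first step: the number of unfinished packets entering iteration $t+1$ is a random variable that depends on iterations $1,\ldots,t$, so invoking Lemma~\ref{lem:LLB-succ} in later iterations and applying bounded differences in the second step both require care in conditioning on the high-probability event that at least $\epsilon n$ packets have survived so far. The $O(\lg\lg n)$ union bound is loose enough to absorb this, and given the history, the current-iteration slot choices of the survivors are independent, which is exactly what bounded differences needs.
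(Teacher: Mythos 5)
Your proposal is correct and follows essentially the same route as the paper's proof: isolate the window of size $cn/\lg\lg\lg n$, use Lemma~\ref{lem:LLB-succ} to show only $o(n)$ packets finish before and during its $\Omega(\lg\lg n)$ repetitions so that $\Omega(n)$ packets persist, lower bound the per-slot collision probability by $\Omega(1)$ via balls-into-bins, and concentrate with bounded differences before multiplying by the number of iterations. Your explicit handling of the inductive conditioning across iterations is a point the paper glosses over, but it does not change the argument.
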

\begin{proof}
We focus on a contention window $w$ of size $cn/\lg\lg\lg n$ for a sufficiently small constant $c>0$. Conservatively, we do not count collisions prior to this window (counting these can only improve our result).

Prior to this window, w.h.p. $o(n)$ packets have succeeded. To see this, note that w.h.p. no packet finishes prior to a window of size $\Theta(n/\lg n)$. The number of intervening windows before reaching size $cn/\lg\lg\lg n$ is less than $\lg\lg n$. Pessimistically assume each intervening window has size $cn/\lg\lg\lg n$, then each results in  $O(n/(\lg\lg n)^{d})$ successful packets  w.h.p. by Lemma~\ref{lem:LLB-succ} for $d>1$. Each such intervening window executes $O(\lg\lg n)$ times. Therefore, the total number of packets finished is still $O(n/(\lg\lg n)^{d'})$ for a constant $d'$ depending only $d$, and so $\Omega(n)$ packets remain.

 %there are fewer than $\lg n$ prior windows -- each  -- over which w.h.p. $O(n/(\lg\lg n)^{\epsilon'\lg(e)/c})$ packets succeed by Lemma~\ref{lem:LLB-succ}. For sufficiently small $c$, this leaves 

In this window, assume that $\epsilon n$ packets exist for some constant $\epsilon>0$. Let $X_j=1$ if slot $j$ contains a collision; otherwise, $X_j=0$. Then we have $Pr[X_j=1]$:\vspace{-5pt}
\begin{eqnarray*}
& = &  \hspace{-7pt}1 - \sum_{k=0}^1 \binom{\epsilon n}{k}\hspace{0pt}\left(\frac{\lg\lg\lg n}{cn}\right)^k \hspace{-3pt} \left(1- \frac{\lg\lg\lg n}{cn}\right)^{\epsilon n-k}\\
& \geq & 1 - \left( 1-\frac{\lg\lg\lg n}{cn}\right)^{\epsilon n} - \frac{\epsilon \lg\lg\lg n}{c}\left(1-\frac{\lg\lg\lg n}{cn} \right)^{\epsilon n - 1 }\\
%& \geq & 1-  \left( 1-\frac{\lg\lg\lg n}{cn}\right)^{\epsilon n}  \left( 1 + \frac{2\epsilon\lg\lg\lg n}{c}\right)\\
& \geq & 1 - O \left( \frac{\epsilon\lg\lg\lg n}{ c\left( \lg\lg n \right)^{2\epsilon\lg(e)/c}} \right)  \mbox{by Taylor series}\\
& = & \Omega(1)
\end{eqnarray*}
\noindent Let $X = \sum_j X_j$. The expected number of collisions over the contention window $w$ is:\vspace{-5pt}
\begin{eqnarray*}
E[X] &=& \sum_j E[X_j] = \Omega\left(\frac{cn}{\lg\lg\lg n}\right)\vspace{-5pt}
\end{eqnarray*}
\vspace{-7pt}

\noindent By the method of bounded differences, w.h.p. this is tight.

The window $w$ is executed $\lg\lg(cn/\lg\lg\lg n) = \Omega(\lg\lg n)$ times. By Lemma~\ref{lem:LLB-succ}, $O(n/(\log\log n)^d)$ packets are succeeding in each such execution for some constant $d>1$ given $c$ is sufficiently small. Thus, there will be at least $\epsilon n$ packets remaining in each execution for some sufficiently small constant $\epsilon > 0$. By the above lower bound on the number of collisions, w.h.p. this results in $\mathcal{C}_{LLB} = \Omega(cn\lg\lg n/\lg\lg\lg n)$ collisions.
\end{proof}

\noindent An argument similar to that used to support  Claim~\ref{claim:LLB} yields:

\begin{claim}
For a single batch of $n$ packets, with high probability~\LB~experiences $\Omega\left(\frac{n\lg n}{\lg\lg n}\right)$ collisions.
\end{claim}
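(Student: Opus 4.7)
The plan is to mirror the structure of Claim~\ref{claim:LLB} with one logarithmic layer peeled off, taking the critical contention window for~\LB~to be of size $cn/\lg\lg n$ for a sufficiently small constant $c>0$. By the analysis of~\LB~in~\cite{BenderFaHe05}, w.h.p.~\LB~does not finish before reaching a window of this magnitude. Under the strictly-doubling variant analyzed here --- a window of size $w$ is repeated $\Theta(\lg w)$ times, which is asymptotically equivalent to the description in Section~\ref{sec:single-batch} --- the critical window is executed $\Omega(\lg n)$ times, and I would conservatively count only collisions that occur within these executions.

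The first step is to establish the analog of Lemma~\ref{lem:LLB-succ}: with $\epsilon n$ packets present and a CW of size $cn/\lg\lg n$, w.h.p. at most $O(n/(\lg n)^{d})$ packets succeed in a single execution, for some constant $d>1$ depending on $\epsilon$ and $c$. The per-slot success probability is $\epsilon n \cdot (\lg\lg n/cn)(1-\lg\lg n/cn)^{\epsilon n-1}$, which by the same Taylor/binomial manipulation used in Lemma~\ref{lem:LLB-succ} is $O(\lg\lg n / (c(\lg n)^{\epsilon\lg(e)/c}))$; choosing $\epsilon$ large and $c$ small so that $\epsilon\lg(e)/c > 1$, summing over the $cn/\lg\lg n$ slots, and applying the method of bounded differences yields the bound.

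The second step is to verify that $\Omega(n)$ packets are still unfinished when the critical window is first entered. A standard calculation shows that w.h.p. no packet succeeds until the window reaches size $\Omega(n/\lg n)$, and the number of doubling steps between there and $cn/\lg\lg n$ is at most $\lg\lg n$. Each such intervening window is executed $O(\lg n)$ times, and by the preceding step each execution produces at most $O(n/(\lg n)^{d})$ successes, for a total of $o(n)$ completions --- so a constant fraction of the packets remains.

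The final step is to lower-bound the collisions. For the critical window with $\Omega(n)$ remaining packets, the Taylor-series computation in Claim~\ref{claim:LLB} (with $\lg\lg n$ replacing $\lg\lg\lg n$) gives $\Pr[X_j=1] = \Omega(1)$, so $\Omega(n/\lg\lg n)$ collisions are expected per execution and the total is $\Omega(n\lg n/\lg\lg n)$ once the $\Omega(\lg n)$ repetitions are accounted for; concentration again follows from bounded differences. The main obstacle I anticipate is coordinating the constants: the exponent $\epsilon\lg(e)/c$ must be large enough that per-window successes remain $O(n/(\lg n)^{d})$ with $d>1$ (so the remaining-packets bound in step two really comes out $o(n)$), while $c$ must simultaneously be small enough to drive $\Pr[X_j=1]$ to a positive constant in step three --- checking that these two requirements can be met by a common choice of $\epsilon$ and $c$ is the most delicate part of the argument.
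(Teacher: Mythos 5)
Your proposal is correct and follows essentially the same route as the paper, which itself only asserts that ``an argument similar to that used to support Claim~\ref{claim:LLB} yields'' this bound: you instantiate that template with the critical window of size $cn/\lg\lg n$ repeated $\Omega(\lg n)$ times, the analog of Lemma~\ref{lem:LLB-succ} bounding successes by $O(n/(\lg n)^d)$ per execution, and the $\Omega(1)$ per-slot collision probability. The constant coordination you flag is not actually delicate, since shrinking $c$ simultaneously enlarges the exponent $\epsilon\lg(e)/c$ and only helps drive the empty/singleton slot probabilities to zero, so both requirements are met by any sufficiently small $c$.
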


%%%%%%%%%%%%%%%%%%%%%%%%%%%%%%%%%%%
%%%%%%%%%%%%%%%%%%%%%%%%%%%%%%%%%%%
%%%%%%%%%%%%%%%%%%%%%%%%%%%%%%%%%%%

\begin{table}[t] 
\begin{center}
{
\begin{tabular}{ |p{1.8cm}|c|c|  }
\hline
\rowcolor{LightCyan}\hspace{0pt}{\bf Algorithm $A$} &  {\bf Num. of Collisions $\mathcal{C}_A$}  & {\bf Total Time $\mathcal{T}_A$}\\
\hline
\hspace{20pt}BEB & $O(n)$ &  $O\left(n\cdot P + n\log n\right)$\\
\hline
\hspace{20pt}LB& $\Theta\left( \frac{n\log n}{\log\log n} \right)$ & $ \Omega\left(\frac{n\log n}{\log\log n}\cdot P\right)$\\
\hline
\hspace{20pt}LLB & $\Theta\left( \frac{n\log\log n}{\log\log\log n} \right)$  & $\Omega\left( \frac{n\log\log n}{\log\log\log n}\cdot P \right)$\\
\hline
\hspace{20pt}STB & $\Theta\left(n\right)$ & $\Theta\left(n\cdot P\right)$\\
\hline
\end{tabular}
}\vspace{-0pt}\caption{Asymptotic bounds on collisions and total time.}% for a single batch of $n$ packets.}\label{table:colls}
\vspace{-10pt}
\end{center}
\end{table}

%%%%%%%%%%%%%%%%%%%%%%%%%%%%%%%%%%%
%%%%%%%%%%%%%%%%%%%%%%%%%%%%%%%%%%%
%%%%%%%%%%%%%%%%%%%%%%%%%%%%%%%%%%%

\begin{figure}[t]
\centering\vspace{-40pt}
\includegraphics[width=0.5\textwidth]{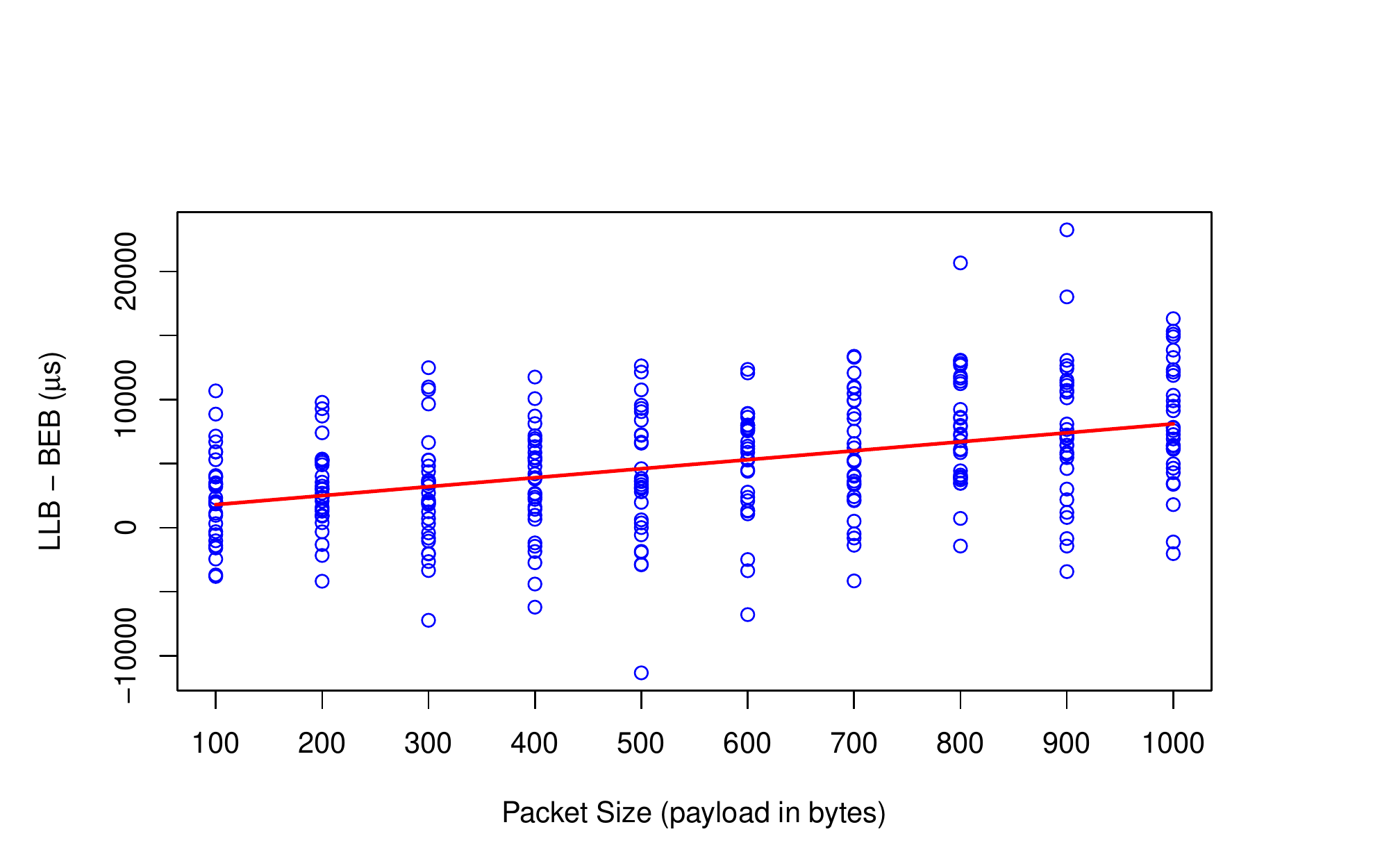}\vspace{-5pt}
\caption*{ {\bf Figure 14.} The difference in total time between LLB and BEB for $n=150$ as packet size increases (30 trials per size).}\vspace{-10pt}
\end{figure}

\subsection{Upper Bounding Collisions in STB}

For a single batch of $n$ packets, it is known that w.h.p. STB~has $\mathcal{W}_{STB} = O(n)$ and this is a trivial upper bound on the number of collisions. A straightforward argument allows us to derive a lower bound of $\Omega(n)$. \vspace{-5pt}

\begin{claim}
For a single batch of $n$ packets, with high probability~\STB~experiences $\Omega(n)$ collisions.
\end{claim}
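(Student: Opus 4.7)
The plan is to prove $\mathcal{C}_{STB} = \Omega(n)$ by isolating a single outer iteration of STB in which the topmost inner window has size $\Theta(n)$ yet still contains $\Omega(n)$ unfinished packets, and then applying a standard balls-and-bins argument to obtain $\Omega(n)$ collisions in that window alone. Combined with the trivial upper bound $\mathcal{C}_{STB} \leq \mathcal{W}_{STB} = O(n)$ already noted above, this would yield $\mathcal{C}_{STB} = \Theta(n)$.

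Fix a sufficiently small constant $\alpha > 0$ and let $W^*$ be the largest power of two with $W^* \leq \alpha n$; note $W^* = \Theta(n)$. First, I would bound the number of CW slots consumed in all outer iterations strictly preceding the one with outer window $W^*$. Each outer iteration with outer window $W'$ performs inner windows of sizes $W', W'/2, \ldots, 2$, contributing $W' + W'/2 + \cdots + 2 = 2W' - 2$ slots. Summing this geometric series over all powers of two $W' < W^*$ gives $O(W^*) = O(\alpha n)$ slots in total. Since each successful transmission occupies its own slot, at most $O(\alpha n)$ packets have succeeded by the time the outer iteration with window $W^*$ begins, so for $\alpha$ chosen small enough, at least $n/2$ packets are still active when $W^*$ is executed.

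Next, I would focus on the \emph{top} inner window (size $W^*$) of this outer iteration. Let $X_j = 1$ if slot $j$ in this window contains a collision, and let $X = \sum_{j=1}^{W^*} X_j$. With $m \geq n/2$ balls placed uniformly at random into $W^* = \Theta(n)$ bins, the ball-to-bin ratio is a positive constant, so by the same direct computation used in Claim~1 (retaining the $k=0$ and $k=1$ terms of the binomial sum and applying a Taylor approximation),
\[
\Pr[X_j = 1] \;=\; 1 - \left(1 - \tfrac{1}{W^*}\right)^{m} - \tfrac{m}{W^*}\left(1 - \tfrac{1}{W^*}\right)^{m-1} \;=\; \Omega(1).
\]
Linearity of expectation gives $E[X] = \Omega(W^*) = \Omega(n)$, and tight concentration around this expectation follows from the method of bounded differences exactly as in the other proofs of this section. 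Hence, w.h.p., this single window alone accounts for $\Omega(n)$ disjoint collisions, establishing $\mathcal{C}_{STB} = \Omega(n)$.

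The main obstacle is the lower bound on the number of packets still active when window $W^*$ is reached: the bookkeeping has to count successes across every prior outer iteration (including all of their backon sub-windows), yet remain loose enough to admit the clean geometric-series telescoping. This is handled by tuning the constant $\alpha$, since the number of prior slots is $O(W^*)$ rather than $O(W^* \log W^*)$ thanks to the backon sizes halving.
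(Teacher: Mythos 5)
Your proposal is correct and follows essentially the same route as the paper's proof: bound the total number of CW slots preceding a $\Theta(n)$-size window by a geometric series, conclude that $\Omega(n)$ packets remain active, show each slot of that window is a collision with constant probability, and finish with linearity of expectation and the method of bounded differences. The only cosmetic difference is that the paper fixes the cutoff at the outer window of size $n/8$ and analyzes the next outer window of size $n/4$, whereas you parameterize by a small constant $\alpha$ and analyze the top inner window of the iteration itself.
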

\begin{proof}
Consider a window of size $n/8$. The total number of slots up to the end of this window (including all the backon windows) is less than $n/2$; therefore, more than $n/2$ packets have not finished by this point. In the next window, which has size $n/4$, the probability of a collision is constant. Therefore, the expected number of collisions is $\mathcal{C}_{STB}=\Omega(n)$ and this is tight by the method of bounded differences.
\end{proof}

Although BEB and STB are asymptotically equal in the number of collisions suffered, we expect the hidden constant in the big-O notation for STB to be larger due to the backon component. We consider this question, amongst others involving asymptotic performance, later in Section~\ref{sec:large}.

%%%%%%%%%%%%%%%%%%%%%%%%%%%%%%%%%%%
%%%%%%%%%%%%%%%%%%%%%%%%%%%%%%%%%%%
%%%%%%%%%%%%%%%%%%%%%%%%%%%%%%%%%%%

\subsection{Asymptotic Behavior of Total Time}\label{sec:as-total}

Plugging in the results from Section~\ref{sec:theory}, our formula for $\mathcal{T}_A =  \Theta\left(\mathcal{C}_A\cdot P  +  \mathcal{W}_A\right)$ yields the third column Table~3. 

Recall that for small $n$, treating  $P$ as a constant is reasonable. However, for large values of $n$, one may argue that $P$ ought to be treated as a slowly growing function of $n$, such as $\Omega(\log n)$.  In this case, we note that both $\mathcal{T}_{LB}$ and $\mathcal{T}_{LLB}$ exceed $\mathcal{T}_{BEB}$ asymptotically. In fact, even a smaller bound $P = \omega(\lg n \lg\lg\lg n/\lg\lg n)$ is sufficient to yield this asymptotic behavior.\vspace{-8pt}

\begin{figure}[H]
\begin{center}
{
\fbox{\colorbox{light-gray}{
\begin{minipage}[h]{0.42\textwidth} 
\noindent{\bf Result 5.~} {\it Theoretical bounds on total time imply that LLB and LB should underperform both BEB and STB for sufficiently large $n$ and $P$.} 
\end{minipage}
          }
     }     
}
\end{center}
\vspace{-10pt}
 \end{figure}

This analysis offers support for our conjecture that the number of collisions is an important metric -- perhaps more so than the number of CW slots -- when it comes to the design of contention-resolution algorithms.

With respect to total time, recall that in Section~\ref{sec:totaltime} an increase in packet size was seen to favor BEB over LLB, the latter being the closest competitor to BEB in our experiments (although, STB is asymptotically superior and we address this issue in Section~\ref{sec:large}). This aligns with the above discussion. Furthermore, as empirical support for our claim, we use NS3 to examine the relative performance of these two algorithms as packet size increases in Figure~14.

As the packet size grows, LLB performs increasingly worse than BEB. We fit a linear regression model of LLB - BEB on the number of packets. This fitting model implies that when the payload size  increases by $100$B, the average increase in total time for LLB is roughly $700 \mu s$ more than the increase experienced by BEB. The increase rate is statistically significant ($p$-value less than $0.001$). \vspace{-5pt}

%%%%%%%%%%%%%%%%%%%%%%%%%%%%%%%%%%%
%%%%%%%%%%%%%%%%%%%%%%%%%%%%%%%%%%%
%%%%%%%%%%%%%%%%%%%%%%%%%%%%%%%%%%%

\section{Discussion} \vspace{-2pt}

We have presented our evidence for why assumption A2 is flawed. In this section, we conclude our argument by considering a few unresolved observations, and discussing the legitimacy of our findings in the context of other protocols/networks. \vspace{-2pt}
 
\subsection{Oddities at Small Scale} \label{sec:large}
 
A few issues remain unaddressed:  \vspace{-0pt}
\begin{enumerate}[label=(\roman*)]
\item In terms of asymptotic bounds on CW slots, the newer algorithms are ordered ``best'' to ``worst'' as STB, LLB, and LB. Yet,  Figure~3 shows LB outperforming LLB. % and  competing with STB. \vspace{-0pt}

\item In terms of asymptotic bounds on the number of collisions, the newer algorithms are ordered ``best'' to ``worst'' as STB, LLB, and LB.  Yet, Figure~11 shows STB suffering a larger number of ACK timeouts than both LLB and LB. \vspace{-0pt}

\item BEB and STB have an asymptotically equal number of collisions, but STB is expected to suffer more and this is supported by Figures~7 and 8. What is the long-term behavior? \vspace{-0pt}
\end{enumerate}

As we discussed previously in Section~\ref{sec:experimental}, NS3 is valuable in revealing flawed assumptions via the extraordinary level of detail it provides; however, this also prevents experimentation with NS3 at larger scales. We attempt to shed light on (i) - (iii) by examining larger values of $n$ in order to see if our predictions are met, and we employ our simpler Java simulation for this task. 

To address (i), we look at $n\leq 10^5$ as plotted in Figure~15. Now we see that STB performs best in terms of CW slots, and that LLB is indeed outperforming LB. This supports prior theoretical results for CW slots given sufficiently large $n$.

In regard to (ii), we again take $n\leq 10^5$ and plot the ratio of collisions: LB vs STB and LLB vs STB.  Figure~16 demonstrates that the number of collisions for LB quickly exceeds STB. The tougher case is LLB which only begins to evidence a greater number of collisions at approximately $n=30,000$. Nevertheless, we observe a trend towards exceeding parity, as expected. Moreover, the sluggish trajectory is not surprising given our analysis in Section~\ref{sec:theory}.  

Finally, for (iii), we observe that the number of collisions for STB is larger than BEB by roughly a factor of 2 over this large range of $n$. Note that the plot of BEB/STB is (roughly) flat, as expected from our asymptotic analysis of collisions.

%%%%%%%%%%%%%%%%%%%%%%%%%%%%%%%%%%%
%%%%%%%%%%%%%%%%%%%%%%%%%%%%%%%%%%%
%%%%%%%%%%%%%%%%%%%%%%%%%%%%%%%%%%% 

\begin{figure}[t]\vspace{-48pt}
\captionsetup[subfigure]{labelformat=empty}
\centering
\hspace{-0.7cm}\begin{subfigure}{0.26\textwidth}
\includegraphics[width=1.15\textwidth]{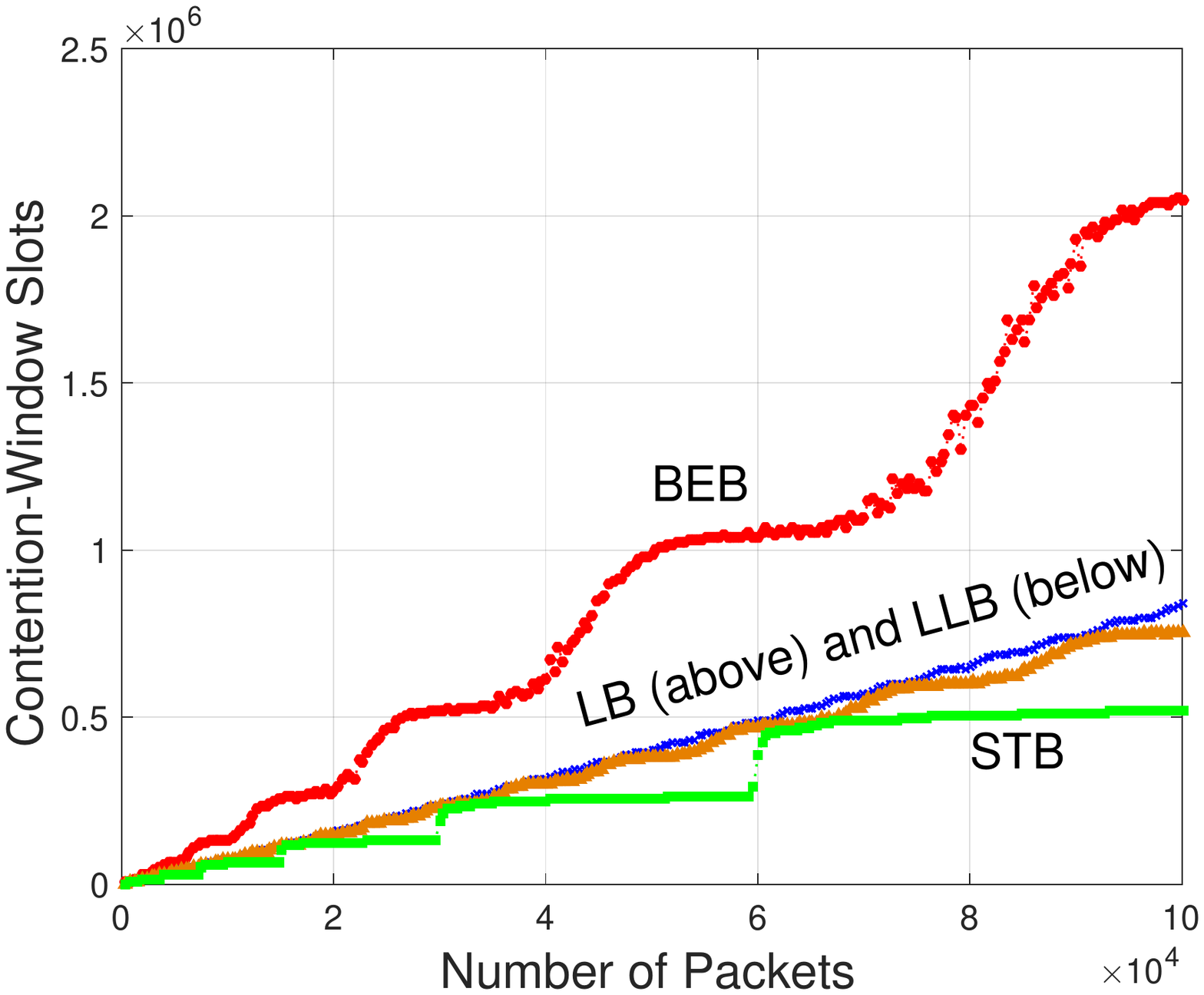}
\vspace{-2cm}\caption{\hspace{19pt}(15)}
\end{subfigure}\hspace{-0.2cm}
\begin{subfigure}{0.26\textwidth}
\includegraphics[width=1.15\textwidth]{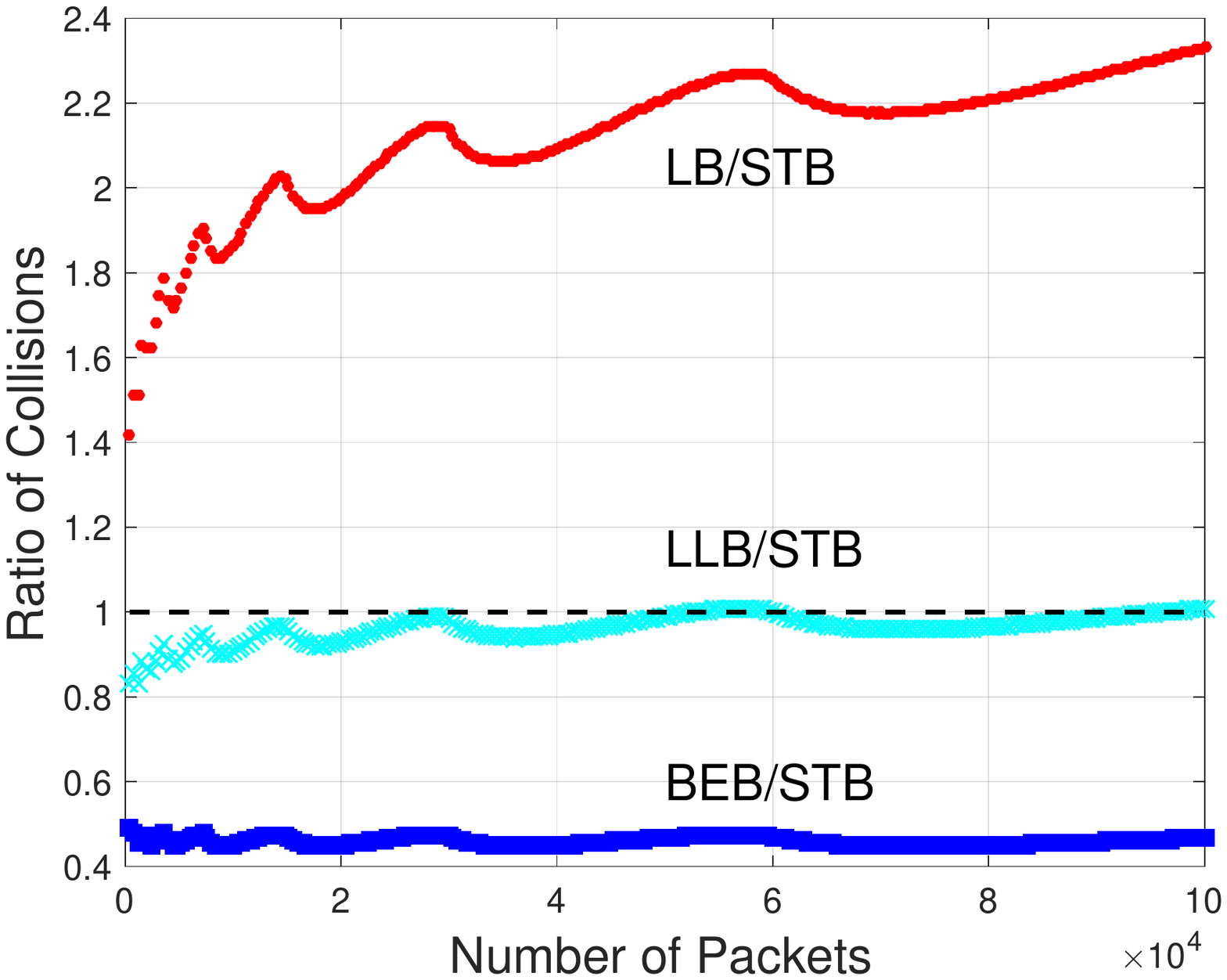} 
\vspace{-2cm}\caption{\hspace{19pt}(16)}
\end{subfigure}
\vspace{-0pt}\caption*{{\bf Figures 15-16.} Results from Java simulation results with $200$ trials per $n\leq 10^6$ in increments of $400$: (15) CW slots with median values plotted, (16) ratio of median number of collisions for BEB, LLB, and LB versus STB.}\label{fig:total-time}\vspace{-5pt}
\end{figure}

%%%%%%%%%%%%%%%%%%%%%%%%%%%%%%%%%%%
%%%%%%%%%%%%%%%%%%%%%%%%%%%%%%%%%%%
%%%%%%%%%%%%%%%%%%%%%%%%%%%%%%%%%%%
 
\subsection{Scope of Our Findings}\label{sec:interpret} 

In this section, we consider to what extent our findings are an artifact of IEEE 802.11g, and whether LB, LLB, STB might do better inside other protocols.\smallskip

\noindent{\bf IEEE 802.11g uses a truncated BEB, is this significant?} In our experiments, the maximum congestion-window size is $1024$ which differs from the abstract model where no such upper bound exists. However, even for $n=150$, this maximum is rarely reached during an execution of BEB and this does not seem to have any noticeable impact on the trend observed in Figures 3 and 4. 
\smallskip\smallskip

\noindent{\bf What if smaller packets are used?} During a collision, the time lost to transmitting would be reduced. In an extreme case, if the transmission of a packet fit within a slot, this would align more closely with A2.

Due to overhead, packet size has a lower bound in IEEE 802.11. Additionally, in NS3, there is a 12-byte payload minimum which translates into a minimum packet size of $76$ bytes for our experiments.\footnote{This is set within the \texttt{UdpClient} class of NS3.} The same qualitative behavior is observed in terms of CW slots and total time. For total time, the increase by LLB, LB, and STB is $6.6\%$, $17.8\%$, and $20.6\%$.

Alternatives to 802.11 might see more significant decreases. However, there is a tradeoff for {\it any} protocol. A smaller packet implies a reduced payload given the need for control information (for routing, error-detection, etc.) and this means that  throughput is degraded.\smallskip\smallskip

\noindent{\bf What if the ACK-timeout duration is reduced or acknowledgements are removed altogether?} This would also bring us closer to A2, although less so than having smaller packets -- the delay from ACK timeouts does not dominate as discussed in Section~\ref{sec:hidden}. In our experiments, the ACK-timeout is $75\mu$s (recall Section ~\ref{sec:experimental}) and values below this threshold will lead a station to consider its packet lost before the ACK can be received. This results in unnecessary retransmissions and, ultimately, poor throughput.

Totally removing acknowledgements (or some form of feedback) is difficult in many settings since, arguably, they are critical to {\it any} protocol that provides reliability; more so when transmissions are subject to disruption by other stations over a shared channel. \smallskip\smallskip

%%%%%%%%%%%%%%%%%%%%%%%%%%%%%%%%%%%
%%%%%%%%%%%%%%%%%%%%%%%%%%%%%%%%%%%
%%%%%%%%%%%%%%%%%%%%%%%%%%%%%%%%%%% 

\noindent{\bf To what extent do these findings generalize to other protocols?} We do {\it not} claim that our findings hold for all protocols. If (a) sufficiently small packets are feasible {\it and} (b) reliability is not paramount, performance should align better with theoretical guarantees derived from using assumption A2.

We {\it do} claim that the performance of how collision detection is performed -- and which is ignored under  A2 --  seems common to several other protocols. Examples include members of the IEEE 802.11 family,  IEEE 802.15.4 (for low-rate wireless networks), and IEEE 802.16 (WiMax). These employ some form of backoff and, regarding (a) and (b), each incurs header bloat and uses feedback via acknowledgements or a timeout to determine success or failure.  This is a significant slice of current wireless standards that, given our findings, could {\it potentially} experience performance degradation if BEB is replaced by LB, LLB, and STB.\vspace{-5pt}

\begin{figure}[h]
\begin{center}
{
\fbox{\colorbox{light-gray}{
\begin{minipage}[h]{0.42\textwidth} 
\noindent{\bf Result 6.} {\it Designing contention-resolution algorithms using assumption A2 seems likely to translate into poor performance in practice for a range of protocols.}
\end{minipage}
          }
     }     
}
\end{center}
\vspace{-10pt}
 \end{figure}

A setting where the abstract model may be valid is networks of multi-antenna devices. If a collision can be detected more efficiently, perhaps by a separate antenna,  the delay due to transmission time can be reduced. Canceling the signal at the sending device so that other transmissions (that would cause the collision) can be detected is challenging. However, this  is possible (for an interesting application, see~\cite{Gollakota:2011:THY:2018436.2018438}) and such schemes have been proposed using multiple-input multiple-output (MIMO) antenna technology~\cite{6963622,6962145}.

Finally, we note that future standards may satisfy (a) and (b). A possible setting is the Internet-of-Things (IoT);  for example,~\cite{7524360} characterizes  IoT transmissions as ``small'' and ``intermittent, delay-sensitive, and short-lived". To reduce delay, the authors argue for removing much of the control messaging used by traditional MAC protocols. Therefore, this setting seems more closely aligned with A2. However, using this same logic,~\cite{7524360} also argues for the removal of any backoff-like contention-resolution mechanism. Nevertheless, these standards  are  in flux and we may see protocols that avoid the issues we identify here.

%%%%%%%%%%%%%%%%%%%%%%%%%%%%%%%%%%%
%%%%%%%%%%%%%%%%%%%%%%%%%%%%%%%%%%%
%%%%%%%%%%%%%%%%%%%%%%%%%%%%%%%%%%%

\section{A Size-Estimation Approach}

Given our findings, we consider an alternative approach to the design of contention-resolution algorithms. Feedback is a useful ingredient  as it allows stations to tune their sending probabilities. For windowed algorithms, this feedback is obtained via collisions which, as we have seen, is  costly.

To avoid this problem, we examine a different approach. Stations first estimate $n$ and then execute \defn{fixed backoff} where the size of each contention window is set to this one-time estimate.  So long as the algorithm avoids an underestimate, the large number of collisions incurred by BEB, LB, LLB and STB should be avoided.  

Work in~\cite{GreenbergFlLa87,bender:contention,cali:dynamic,cali:design,bianchi:kalman} examines size estimation as a means for improving performance, although the methods and traffic assumptions differ (see Section~\ref{sec:related}). We aim to experiment with an algorithm for a single batch of arrivals, whose specification lends itself to implementation, and whose improved performance manifests for practical values of $n$.

To this end, the size-estimation component of our algorithm, \textsc{Best-of-}$k$, specified in Figure~17 is a variant of a well-known ``folklore'' result (see~\cite{jurdzinski:energy}). For $k=\Theta(1)$, a significant overestimate may occur, but the amount by which it can underestimate is bounded; w.h.p. the estimate will be $\Omega(n/\log n)$. %Increasing $k$ leads to better provable guarantees, but we are interested in performance.

%%%%%%%%%%%%%%%%%%%%%%%%%%%%%%%%%%%
%%%%%%%%%%%%%%%%%%%%%%%%%%%%%%%%%%%
%%%%%%%%%%%%%%%%%%%%%%%%%%%%%%%%%%% 

 \begin{figure}[t]
\begin{mdframed}
\begin{center}

\selectfont

\fbox{\hspace{-7pt}\colorbox{light-gray}{
\begin{minipage}[t]{0.99\textwidth} 

	\noindent{}\textsc{Best-of-$k$}\vspace{-3pt}

\begin{itemize}[leftmargin=3mm]

\item For $i = 0$ to $10$,  do:\vspace{-0pt}

\begin{itemize}[leftmargin=3mm]

\item For each of $k$ consecutive slots,  do:\vspace{-0pt}

\begin{itemize}[leftmargin=3mm]\renewcommand{\labelitemiii}{$-$}

\item With probability $\frac{1}{2^i}$, send dummy packet; otherwise, sense the channel.

\item If channel is clear for more than $k/2$ slots, then:
	
\begin{itemize}[leftmargin=3mm]\renewcommand{\labelitemiii}{$-$}

\item $W \leftarrow 2^i$

\item Terminate and start executing fixed backoff using a contention window of size $W$

\end{itemize}
\end{itemize}

\end{itemize}

\end{itemize}

 \end{minipage}
          }
     }     
\end{center}
\end{mdframed}
\vspace{-5pt}
\caption*{{\bf Figure 17.} The \textsc{Best-of-k} contention-resolution algorithm.}\label{fig:generic-backoff}\vspace{-3pt}
 \end{figure} 
 %%%%%%%%%%%%%%%%%%%%%%%%%%%%%%%%%%%
%%%%%%%%%%%%%%%%%%%%%%%%%%%%%%%%%%%
%%%%%%%%%%%%%%%%%%%%%%%%%%%%%%%%%%% 

Dummy packets of $28$ bytes are used in the size-estimation phase; this small size is possible because the packets contain none of the upper-layer headers (these are not used in our IEEE 802.11 experiments since routing requires the upper-layer headers).  Execution proceeds in $35\mu$s rounds during which a dummy packet is transmitted. Channel sensing is used to distinguish ``busy'' from ``clear''; therefore, we avoid any collision detection and the use of any acknowledgements for these dummy packets.

As expected, for $k=3$, the estimates are somewhat noisy, but this improves with $k=5$; see Figure~18. Notably, increasing $k$ does not significantly impact performance since the time required to run the size-estimation component is negligible (less than $5\%$) of the total time; instead, running fixed backoff is the main source of delay. We also observe that only overestimates occur, as predicted. This has the benefit of yielding good performance due to the lack of collisions. As demonstrated by Figure~19, both versions of the size-estimation approach outperform BEB, with $k=3$ and $k=5$ yielding a decrease in total time of  $26.0\%$ and $24.7\%$ respectively.\vspace{-5pt}

\begin{figure}[h]
\begin{center}
{
\fbox{\colorbox{light-gray}{
\begin{minipage}[h]{0.42\textwidth} 
\noindent{\bf Result 7.} {\it A size-estimation approach to designing contention-resolution algorithms appears promising.}
\end{minipage}
          }
     }     
}
\end{center}
\vspace{-10pt}
 \end{figure}

%%%%%%%%%%%%%%%%%%%%%%%%%%%%
%%%%%%%%%%%%%%%%%%%%%%%%%%%%
%%%%%%%%%%%%%%%%%%%%%%%%%%%%

\begin{figure}[t]
\captionsetup[subfigure]{labelformat=empty}
\centering
\vspace{-1.85cm}
\hspace{-0.7cm}\begin{subfigure}{0.26\textwidth}
\includegraphics[width=1.15\textwidth]{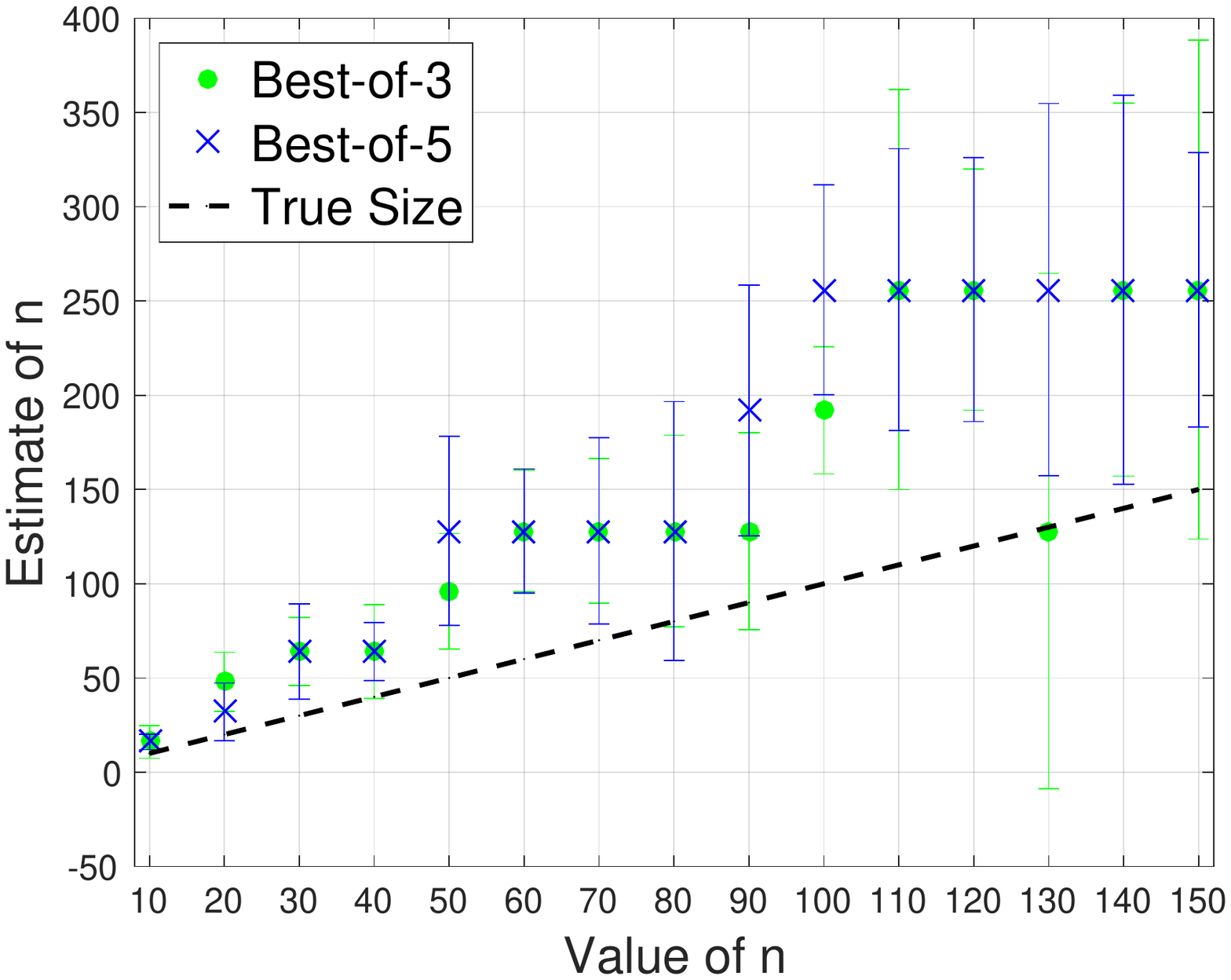}
\vspace{-2cm}\caption{\hspace{19pt}(18)}
\end{subfigure}\hspace{-0.2cm}
\begin{subfigure}{0.26\textwidth}
\includegraphics[width=1.15\textwidth]{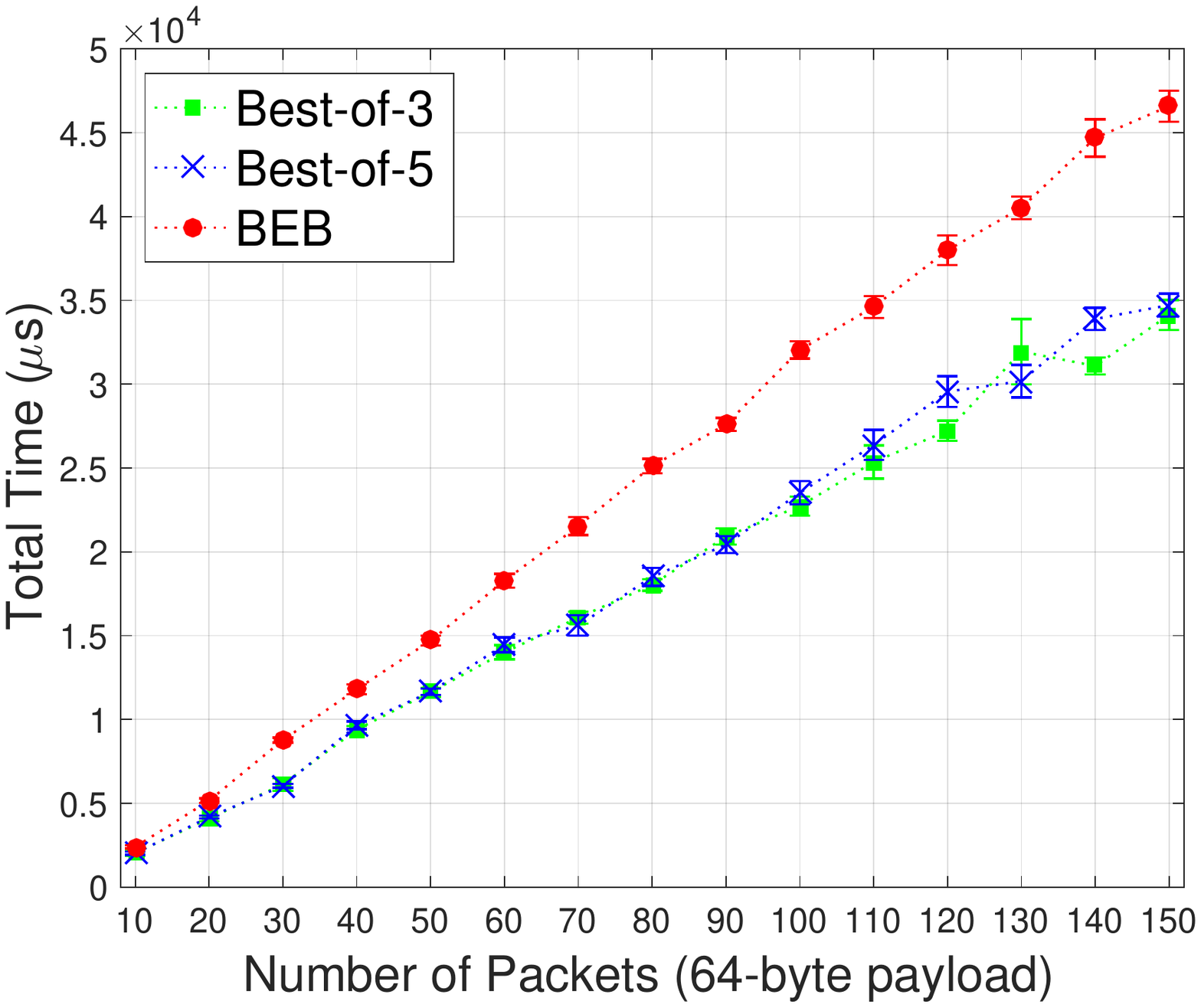} 
\vspace{-2cm}\caption{\hspace{19pt}(19)}
\end{subfigure}
\vspace{-0pt}\caption*{{\bf Figures 18-19.} Median values are reported for NS3 results with $20$ trials for each value of $n$: (18) estimation values, (19) total time ($\mu$s) for BEB, Best-of-3, and Best-of-5. Bars represent $95\%$ confidence intervals.}\vspace{-0.5cm}
\end{figure}

Finally,  we note our assumption of synchronization might not hold in a dynamic setting. Furthermore, a real-world deployment would likely be ``messier'' with respect to interference from other devices/networks running different applications, impact of terrain and weather on transmissions,  etc. and our simple setup does not account for such phenomena. Additional methods would be needed to address these issues. However, as an initial proof of concept, our results suggest an approach to contention resolution that {\it may} compete with BEB in the single-batch case.

%%%%%%%%%%%%%%%%%%%%%%%%%%%%%%%%%
%%%%%%%%%%%%%%%%%%%%%%%%%%%%%%%%%
%%%%%%%%%%%%%%%%%%%%%%%%%%%%%%%%%
 
\section{Related Work}\label{sec:related}

Exponential backoff has been studied under Poisson-distributed traffic~(see~\cite{GoldbergMa96a,GoldbergMaPaSr00,HastadLeRo87,RaghavanUp95}). Guarantees on stability are known~\cite{Al-Ammal2000,Al-Ammal2001,Goodman:1988:SBE:44483.44488}, and under saturated conditions~\cite{bianchi:performance}.

There is a vast body of literature addressing the performance of IEEE 802.11 (for examples, see~\cite{Ni:survey,Kuptsov201437,5191039,1543687,LiHSC03,duda:understanding,5963276}). There are several results that focus on the performance of BEB within IEEE 802.11; however, they do not address issues of the abstract model, bursty traffic, or the newer algorithms examined here. Nonetheless, we summarize those works that are most closely related.

Under continuous traffic, windowed backoff schemes are examined in~\cite{sun:backoff} with a focus on the tradeoff between throughput and fairness. The authors focus on polynomial backoff and demonstrate via analysis and NS2 (the predecessor to NS3) simulations that quadratic backoff is a good candidate with respect to both metrics.

Work in~\cite{1424043} addresses saturated throughput (each station always has a packet ready to be transmitted) of exponential backoff; roughly, this is the maximum throughput under stable packet arrival rates. Custom  simulations are used to confirm these findings. 

In~\cite{6859627}, the authors propose backoff algorithms where the size of the contention window is modified by a small constant factor based on the number of successful transmissions observed. NS2  simulations are used to demonstrate improvements over BEB within 802.11 for a steady stream of packets (i.e. non-bursty traffic). 

Lastly, in~\cite{saher:log}, the authors examine a variation on backoff where the contention window increases multiplicatively by the logarithm of the current window size (confusingly, also referred to as ``logarithmic backoff''). NS2  simulations imply an advantage to their variant over BEB within IEEE 802.11, again for non-bursty traffic. 

In regard to size-estimation approaches, there is prior work on tuning the probability of a transmission under Poisson-distributed traffic~\cite{hajek:decentralized,kelly:decentralized,Gerla:1977}. Subsequent work in~\cite{cali:dynamic,cali:design} offers (custom) simulation and analytical results on performance improvements assuming that the transmission interval for which a station backs off is sampled from the geometric distribution. In~\cite{bianchi:kalman}, the authors propose a method for estimating the number of contending stations under saturation conditions, and custom simulations demonstrate the accuracy of this approach. More recent work in~\cite{bender:contention} proposes a size-estimation scheme with small (asymptotic) sending and listening costs; however, no experimental results are provided and implementing this scheme may be challenging.

Regarding the time required for a single successful transmission,~\cite{willard:loglog} demonstrates a lower-bound of $\Omega(\log\log n)$. In different communication models, other bounds are known~\cite{fineman:contention2,fineman:contention}. 

A class of tree-based algorithms for contention resolution is proposed in~\cite{Capetanakis:2006}. Work in~\cite{bender:heterogeneous} addresses the case of heterogeneous packet sizes. The case where packets can arrive dynamically is examined  in~\cite{chlebus:better,chlebus:wakeup,chrobak:wakeup}.

Energy efficiency is important to multiple access in many low-power wireless networks~\cite{jurdzinski:energy,chang:exponential,bender:contention}. When the communication channel is subject to adversarial disruption, several results address the challenge of multiple access ~\cite{awerbuch:jamming,richa:jamming2,richa:jamming3,richa:jamming4,
ogierman:competitive,richa:efficient-j,richa:competitive-j,bender:how,Tan2014}.  Finally, deterministic broadcast protocols have also received significant attention~\cite{ChlebusKoRo06,ChlebusKoRo12,anantharamu:adversarial-opodis}. 

\section{Concluding Remarks}

We have presented evidence that a model commonly used for designing contention-resolution algorithms is not adequately accounting for the cost of collisions. A number of interesting questions remain. 

In terms of analytical work, we have argued for why collisions matter at small scale and asymptotically, but what is the optimal tradeoff between collisions and CW slots? Does this change when we consider multi-hop networks or long-lived bursty traffic? Assuming that this tradeoff is known, can we design algorithms that leverage this information? 

Regarding future experimental work, it may be of interest to perform a similar evaluation on other protocols. For example, much of what is examined in this work seems to apply to contention resolution under IEEE 802.15.4, and we expect collisions to be similarly expensive. However, are there subtle differences in the protocol that allow IEEE 802.15.4 to avoid collisions? What about newer wireless standards? Understanding any such behavior may aid in the design of future contention-resolution algorithms.
\smallskip\smallskip

\noindent{\bf Acknowledgements.} We are grateful to David Dampier for providing us with access to the computing resources at HPC$^2$.\vspace{-0pt}

%\bibliographystyle{IEEEtran}
%\bibliography{jam,backoff,more-backoff}

\begin{thebibliography}{10}
\providecommand{\url}[1]{#1}
\csname url@samestyle\endcsname
\providecommand{\newblock}{\relax}
\providecommand{\bibinfo}[2]{#2}
\providecommand{\BIBentrySTDinterwordspacing}{\spaceskip=0pt\relax}
\providecommand{\BIBentryALTinterwordstretchfactor}{4}
\providecommand{\BIBentryALTinterwordspacing}{\spaceskip=\fontdimen2\font plus
\BIBentryALTinterwordstretchfactor\fontdimen3\font minus
  \fontdimen4\font\relax}
\providecommand{\BIBforeignlanguage}[2]{{%
\expandafter\ifx\csname l@#1\endcsname\relax
\typeout{** WARNING: IEEEtran.bst: No hyphenation pattern has been}%
\typeout{** loaded for the language `#1'. Using the pattern for}%
\typeout{** the default language instead.}%
\else
\language=\csname l@#1\endcsname
\fi
#2}}
\providecommand{\BIBdecl}{\relax}
\BIBdecl

\bibitem{GoldbergMa96a}
L.~Goldberg and P.~MacKenzie, ``Analysis of practical backoff protocols for
  contention resolution with multiple servers,'' Warwick, ALCOM-IT Technical
  Report TR-074-96, 1996,
  \url{http://www.dcs.warwick.ac.uk/~leslie/alcompapers/contention.ps}.

\bibitem{GoldbergMaPaSr00}
L.~A. Goldberg, P.~D. MacKenzie, M.~Paterson, and A.~Srinivasan, ``{Contention
  Resolution with Constant Expected Delay},'' vol.~47, no.~6, pp. 1048--1096,
  Nov. 2000.

\bibitem{HastadLeRo87}
J.~Hastad, T.~Leighton, and B.~Rogoff, ``Analysis of backoff protocols for
  multiple access channels,'' in \emph{STOC'87}, New York, New York, May 1987,
  pp. 241--253.

\bibitem{RaghavanUp95}
P.~Raghavan and E.~Upfal, ``Stochastic contention resolution with short
  delays,'' in \emph{Proceedings of the Twenty-Seventh Annual {ACM} Symposium
  on the Theory of Computing (STOC)}, 1995, pp. 229--237.

\bibitem{Al-Ammal2000}
H.~Al-Ammal, L.~A. Goldberg, and P.~MacKenzie, \emph{{Binary Exponential
  Backoff Is Stable for High Arrival Rates}}.\hskip 1em plus 0.5em minus
  0.4em\relax Berlin, Heidelberg: Springer Berlin Heidelberg, 2000, pp.
  169--180.

\bibitem{Al-Ammal2001}
H.~Al-Ammal, A.~L. Goldberg, and P.~MacKenzie, ``An improved stability bound
  for binary exponential backoff,'' \emph{Theory of Computing Systems},
  vol.~34, no.~3, pp. 229--244, 2001.

\bibitem{Goodman:1988:SBE:44483.44488}
J.~Goodman, A.~G. Greenberg, N.~Madras, and P.~March, ``Stability of binary
  exponential backoff,'' \emph{J. ACM}, vol.~35, no.~3, pp. 579--602, Jun.
  1988.

\bibitem{bianchi:performance}
G.~Bianchi, ``{Performance Analysis of the {IEEE} 802.11 Distributed
  Coordination Function},'' \emph{IEEE Journal on Selected Areas in
  Communications}, vol.~18, no.~3, pp. 535--547, Sep. 2006.

\bibitem{song:stability}
N.-O. Song, B.-J. Kwak, and L.~E. Miller, ``On the stability of exponential
  backoff,'' \emph{Journal of Research of the National Institute of Standards
  and Technology}, vol. 108, no.~4, 2003.

\bibitem{BenderFaHe05}
M.~A. Bender, M.~Farach-Colton, S.~He, B.~C. Kuszmaul, and C.~E. Leiserson,
  ``{Adversarial Contention Resolution for Simple Channels},'' in \emph{Proc.\
  17th Annual ACM Symposium on Parallelism in Algorithms and Architectures
  (SPAA)}, 2005, pp. 325--332.

\bibitem{GreenbergFlLa87}
A.~G. Greenberg, P.~Flajolet, and R.~E. Ladner, ``Estimating the multiplicities
  of conflicts to speed their resolution in multiple access channels,''
  \emph{JACM}, vol.~34, no.~2, pp. 289--325, Apr. 1987.

\bibitem{bender:how}
M.~A. Bender, J.~T. Fineman, S.~Gilbert, and M.~Young, ``{How to Scale
  Exponential Backoff: Constant Throughput, Polylog Access Attempts, and
  Robustness},'' in \emph{Proceedings of the Twenty-Seventh Annual ACM-SIAM
  Symposium on Discrete Algorithms}, ser. SODA '16, 2016, pp. 636--654.

\bibitem{bender:contention}
M.~A. Bender, T.~Kopelowitz, S.~Pettie, and M.~Young, ``{Contention Resolution
  with Log-logstar Channel Accesses},'' in \emph{Proceedings of the 48th Annual
  ACM SIGACT Symposium on Theory of Computing}, ser. STOC 2016, 2016, pp.
  499--508.

\bibitem{fineman:contention}
J.~T. Fineman, S.~Gilbert, F.~Kuhn, and C.~Newport, ``{Contention Resolution on
  a Fading Channel},'' in \emph{Proceedings of the 2016 ACM Symposium on
  Principles of Distributed Computing}, ser. PODC '16, 2016, pp. 155--164.

\bibitem{fineman:contention2}
J.~T. Fineman, C.~Newport, and T.~Wang, ``{Contention Resolution on Multiple
  Channels with Collision Detection},'' in \emph{Proceedings of the 2016 ACM
  Symposium on Principles of Distributed Computing}, ser. PODC '16, 2016, pp.
  175--184.

\bibitem{bender:heterogeneous}
M.~A. Bender, J.~T. Fineman, and S.~Gilbert, ``{Contention Resolution with
  Heterogeneous Job Sizes},'' in \emph{Proceedings of the 14th Conference on
  Annual European Symposium (ESA)}, 2006, pp. 112--123.

\bibitem{Teymori2005}
S.~Teymori and W.~Zhuang, \emph{{Queue Analysis for Wireless Packet Data
  Traffic}}.\hskip 1em plus 0.5em minus 0.4em\relax Berlin, Heidelberg:
  Springer Berlin Heidelberg, 2005, pp. 217--227.

\bibitem{yu:study}
J.~Yu and A.~P. Petropulu, ``{Study of the Effect of the Wireless Gateway on
  Incoming Self-Similar Traffic},'' \emph{IEEE Transactions on Signal
  Processing}, vol.~54, no.~10, pp. 3741--3758, 2006.

\bibitem{Ghani:2010}
S.~Ghani, ``{The Impact of Self Similar Traffic on Wireless LAN},'' in
  \emph{Proceedings of the 6th International Wireless Communications and Mobile
  Computing Conference}, ser. IWCMC '10, 2010, pp. 52--56.

\bibitem{sarkar:effect}
N.~I. Sarkar and K.~W. Sowerby, ``{The Effect of Traffic Distribution and
  Transport Protocol on WLAN Performance},'' in \emph{Telecommunication
  Networks and Applications Conference (ATNAC), 2009 Australasian}, 2009, pp.
  1--6.

\bibitem{canberk:self}
B.~Canberk and S.~Oktug, ``{Self Similarity Analysis and Modeling of VoIP
  Traffic under Wireless Heterogeneous Network Environment},'' in
  \emph{Telecommunications, 2009. AICT '09. Fifth Advanced International
  Conference on}, 2009, pp. 76--82.

\bibitem{bhandari:performance}
B.~N. Bhandari, R.~V.~R. Kumar, and S.~L. Maskara, ``{Performance of IEEE
  802.16 MAC Layer Protocol Under Conditions of Self-Similar Traffic},'' in
  \emph{TENCON 2008 - 2008 IEEE Region 10 Conference}, 2008, pp. 1--4.

\bibitem{Anta2010}
A.~F. Anta and M.~A. Mosteiro, \emph{{Contention Resolution in Multiple-Access
  Channels: k-Selection in Radio Networks}}, 2010, pp. 378--388.

\bibitem{komlos:asymptotically}
J.~Komlos and A.~Greenberg, ``{An Asymptotically Fast Nonadaptive Algorithm for
  Conflict Resolution in Multiple-access Channels},'' \emph{IEEE Trans. Inf.
  Theor.}, vol.~31, no.~2, pp. 302--306, Sep. 2006.

\bibitem{Capetanakis:2006}
J.~Capetanakis, ``{Tree Algorithms for Packet Broadcast Channels},'' \emph{IEEE
  Trans. Inf. Theor.}, vol.~25, no.~5, pp. 505--515, Sep. 2006.

\bibitem{avin:sinr}
C.~Avin, Y.~Emek, E.~Kantor, Z.~Lotker, D.~Peleg, and L.~Roditty, ``{SINR
  Diagrams: Towards Algorithmically Usable SINR Models of Wireless Networks},''
  in \emph{Proceedings of the 28th ACM Symposium on Principles of Distributed
  Computing (PODC)}, 2009.

\bibitem{moscibroda:worst}
T.~Moscibroda, ``The worst-case capacity of wireless sensor networks,'' in
  \emph{2007 6th International Symposium on Information Processing in Sensor
  Networks}, 2007, pp. 1--10.

\bibitem{Hall2009}
M.~M. Halld{\'o}rsson and R.~Wattenhofer, \emph{Wireless Communication Is in
  APX}.\hskip 1em plus 0.5em minus 0.4em\relax Berlin, Heidelberg: Springer
  Berlin Heidelberg, 2009, pp. 525--536.

\bibitem{MetcalfeBo76}
R.~M. Metcalfe and D.~R. Boggs, ``Ethernet: Distributed packet switching for
  local computer networks,'' \emph{CACM}, vol.~19, no.~7, pp. 395--404, July
  1976.

\bibitem{NS3}
N.-. Consortium, ``{NS-3},'' 2017, \url{www.nsnam.org}.

\bibitem{weingartner:performance}
E.~Weing\"{a}rtner, H.~Vom~Lehn, and K.~Wehrle, ``{A Performance Comparison of
  Recent Network Simulators},'' in \emph{Proceedings of the 2009 IEEE
  International Conference on Communications}, ser. ICC'09, 2009, pp.
  1287--1291.

\bibitem{stoffers:comparing}
M.~Stoffers and G.~Riley, ``Comparing the {NS-3} propagation models,'' in
  \emph{Proceedings of the IEEE 20th International Symposium on Modeling,
  Analysis and Simulation of Computer and Telecommunication Systems}, 2012, pp.
  61--67.

\bibitem{lacage:yans}
M.~Lacage and T.~R. Henderson, ``{Yet Another Network Simulator},'' in
  \emph{Proceeding from the 2006 Workshop on {NS-2}: The IP Network Simulator},
  ser. WNS2 '06, 2006.

\bibitem{802.11-standard}
``{IEEE Standard for Information Technology--Telecommunications and Information
  Exchange Between Systems Local and Metropolitan Area Networks -- Specific
  Requirements - Part 11: Wireless LAN Medium Access Control (MAC) and Physical
  Layer (PHY) Specifications},'' \emph{IEEE Std 802.11-2016 (Revision of IEEE
  Std 802.11-2012)}, pp. 1--3534, 2016.

\bibitem{Gereb-GrausT92}
M.~Ger{\'{e}}b{-}Graus and T.~Tsantilas, ``{Efficient Optical Communication in
  Parallel Computers},'' in \emph{Proceedings 4th Annual ACM Symposium on
  Parallel Algorithms and Architectures (SPAA)}, 1992, pp. 41--48.

\bibitem{GreenbergL85}
R.~I. Greenberg and C.~E. Leiserson, ``Randomized routing on fat-trees,'' in
  \emph{Proc. of the Symp. on Foundations of Computer Science (FOCS)}, 1985,
  pp. 241--249.

\bibitem{dubhashi:concentration}
D.~Dubhashi and A.~Panconesi, \emph{{Concentration of Measure for the Analysis
  of Randomized Algorithms}}, 1st~ed.\hskip 1em plus 0.5em minus 0.4em\relax
  Cambridge University Press, 2009.

\bibitem{Gollakota:2011:THY:2018436.2018438}
S.~Gollakota, H.~Hassanieh, B.~Ransford, D.~Katabi, and K.~Fu, ``{They Can Hear
  Your Heartbeats: Non-invasive Security for Implantable Medical Devices},'' in
  \emph{Proceedings of the ACM SIGCOMM 2011 Conference}, ser. SIGCOMM '11,
  2011, pp. 2--13.

\bibitem{6963622}
M.~Kawahara, K.~Nishimori, T.~Hiraguri, and H.~Makino, ``{A New Propagation
  Model for Collision Detection using MIMO Transmission in Wireless LAN
  Systems},'' in \emph{2014 IEEE International Workshop on Electromagnetics
  (iWEM)}, 2014, pp. 34--35.

\bibitem{6962145}
Y.~Morino, T.~Hiraguri, T.~Ogawa, H.~Yoshino, and K.~Nishimori, ``{Analysis
  Evaluation of Collision Detection Scheme Utilizing MIMO Transmission},'' in
  \emph{2014 IEEE 10th International Conference on Wireless and Mobile
  Computing, Networking and Communications (WiMob)}, 2014, pp. 28--32.

\bibitem{7524360}
A.~Bakshi, L.~Chen, K.~Srinivasan, C.~E. Koksal, and A.~Eryilmaz, ``{EMIT: An
  Efficient MAC Paradigm for the Internet of Things},'' in \emph{IEEE INFOCOM
  2016 - The 35th Annual IEEE International Conference on Computer
  Communications}, 2016, pp. 1--9.

\bibitem{cali:dynamic}
F.~Cali, M.~Conti, and E.~Gregori, ``{Dynamic Tuning of the IEEE 802.11
  Protocol to Achieve a Theoretical Throughput Limit},'' \emph{IEEE/ACM
  Transactions on Networking}, vol.~8, no.~6, pp. 785--799, 2000.

\bibitem{cali:design}
------, ``{IEEE 802.11 Protocol: Design and Performance Evaluation of an
  Adaptive Backoff Mechanism},'' \emph{IEEE Journal on Selected Areas in
  Communications}, vol.~18, no.~9, pp. 1774--1786, 2000.

\bibitem{bianchi:kalman}
G.~Bianchi and I.~Tinnirello, ``{Kalman Filter Estimation of the Number of
  Competing Terminals in an IEEE 802.11 Network},'' in \emph{Proceedings of the
  $22^{nd}$ Annual Joint Conference of the IEEE Computer and Communications
  Societies (INFOCOM)}, vol.~2, 2003, pp. 844--852.

\bibitem{jurdzinski:energy}
T.~Jurdzi\'{n}ski, M.~Kuty\l{}owski, and J.~Zatopia\'{n}ski,
  ``{Energy-Efficient Size Approximation of Radio Networks with No Collision
  Detection},'' in \emph{Proceedings of the 8th Annual International Conference
  (COCOON)}, 2002, pp. 279--289.

\bibitem{Ni:survey}
Q.~Ni, L.~Romdhani, and T.~Turletti, ``{A Survey of QoS Enhancements for IEEE
  802.11 Wireless LAN: Research Articles},'' \emph{Wireless Communications and
  Mobile Computing}, vol.~4, no.~5, pp. 547--566, Aug. 2004.

\bibitem{Kuptsov201437}
``{How Penalty Leads to Improvement: A Measurement Study of Wireless Backoff in
  \{IEEE\} 802.11 Networks},'' \emph{Computer Networks}, vol. 75, Part A, pp.
  37 -- 57, 2014.

\bibitem{5191039}
I.~Tinnirello, G.~Bianchi, and Y.~Xiao, ``Refinements on ieee 802.11
  distributed coordination function modeling approaches,'' \emph{IEEE
  Transactions on Vehicular Technology}, vol.~59, no.~3, pp. 1055--1067, 2010.

\bibitem{1543687}
N.~Choi, Y.~Seok, Y.~Choi, S.~Kim, and H.~Jung, ``{P-DCF: Enhanced Backoff
  Scheme for the IEEE 802.11 DCF},'' in \emph{2005 IEEE 61st Vehicular
  Technology Conference}, vol.~3, 2005.

\bibitem{LiHSC03}
J.~Li, Z.~J. Haas, M.~Sheng, and Y.~Chen, ``{Performance Evaluation of Modified
  IEEE 802.11 MAC for Multi-Channel Multi-Hop Ad Hoc Networks},'' \emph{Journal
  of Interconnection Networks}, vol.~4, no.~3, pp. 345--359, 2003.

\bibitem{duda:understanding}
A.~Duda, ``{Understanding the Performance of 802.11 Networks},'' in \emph{2008
  IEEE 19th International Symposium on Personal, Indoor and Mobile Radio
  Communications}, 2008, pp. 1--6.

\bibitem{5963276}
Y.~H. Zhu, X.~Z. Tian, and J.~Zheng, ``Performance analysis of the binary
  exponential backoff algorithm for ieee 802.11 based mobile ad hoc networks,''
  in \emph{2011 IEEE International Conference on Communications (ICC)}, 2011,
  pp. 1--6.

\bibitem{sun:backoff}
X.~Sun and L.~Dai, ``{Backoff Design for IEEE 802.11 DCF Networks: Fundamental
  Tradeoff and Design Criterion},'' \emph{IEEE/ACM Transactions on Networking},
  vol.~23, no.~1, pp. 300--316, 2015.

\bibitem{1424043}
B.-J. Kwak, N.-O. Song, and L.~E. Miller, ``Performance analysis of exponential
  backoff,'' \emph{IEEE/ACM Transactions on Networking}, vol.~13, no.~2, pp.
  343--355, 2005.

\bibitem{6859627}
M.~Shurman, B.~Al-Shua'b, M.~Alsaedeen, M.~F. Al-Mistarihi, and K.~A. Darabkh,
  ``{N-BEB: New Backoff Algorithm for IEEE 802.11 MAC Protocol},'' in
  \emph{2014 37th International Convention on Information and Communication
  Technology, Electronics and Microelectronics (MIPRO)}, 2014, pp. 540--544.

\bibitem{saher:log}
S.~S. Manaseer, M.~Ould-Khaoua, and L.~M. Mackenzie, ``{On the Logarithmic
  Backoff Algorithm for MAC Protocol in MANETs},'' in \emph{Integrated
  Approaches in Information Technology and Web Engineering: Advancing
  Organizational Knowledge Sharing}.\hskip 1em plus 0.5em minus 0.4em\relax IGI
  Global, 2009, ch.~12, pp. 174--184.

\bibitem{hajek:decentralized}
B.~Hajek and T.~van Loon, ``{Decentralized Dynamic Control of a Multiaccess
  Broadcast Channel},'' \emph{IEEE Transactions on Automatic Control}, vol.~27,
  no.~3, pp. 559--569, 1982.

\bibitem{kelly:decentralized}
F.~P. Kelly, ``{Stochastic Models of Computer Communication Systems},''
  \emph{Journal of the Royal Statistical Society, Series B (Methodological)},
  vol.~47, no.~3, pp. 379--395, 1985.

\bibitem{Gerla:1977}
M.~Gerla and L.~Kleinrock, ``{Closed Loop Stability Controls for S-aloha
  Satellite Communications},'' in \emph{Proceedings of the Fifth Symposium on
  Data Communications}, ser. SIGCOMM '77, 1977, pp. 2.10--2.19.

\bibitem{willard:loglog}
D.~E. Willard, ``{Log-logarithmic Selection Resolution Protocols in a Multiple
  Access Channel},'' \emph{SIAM J. Comput.}, vol.~15, no.~2, pp. 468--477, May
  1986.

\bibitem{chlebus:better}
B.~S. Chlebus and D.~R. Kowalski, ``{A Better Wake-up in Radio Networks},'' in
  \emph{Proceedings of 23rd ACM Symposium on Principles of Distributed
  Computing (PODC)}, 2004, pp. 266--274.

\bibitem{chlebus:wakeup}
B.~S. Chlebus, L.~Gasieniec, D.~R. Kowalski, and T.~Radzik, ``{On the Wake-up
  Problem in Radio Networks},'' in \emph{Proceedings of the 32nd International
  Colloquium on Automata, Languages and Programming (ICALP)}, 2005, pp.
  347--359.

\bibitem{chrobak:wakeup}
M.~Chrobak, L.~Gasieniec, and D.~R. Kowalski, ``{The Wake-Up Problem in
  Multihop Radio Networks},'' \emph{SIAM Journal on Computing}, vol.~36, no.~5,
  pp. 1453--1471, 2007.

\bibitem{chang:exponential}
Y.-J. Chang, T.~Kopelowitz, S.~Pettie, R.~Wang, and W.~Zhan, ``{Exponential
  Separations in the Energy Complexity of Leader Election },'' in
  \emph{Proceedings of the Annual ACM SIGACT Symposium on Theory of Computing
  (STOC)}, 2017.

\bibitem{awerbuch:jamming}
B.~Awerbuch, A.~Richa, and C.~Scheideler, ``{A Jamming-Resistant \textsc{MAC}
  Protocol for Single-Hop Wireless Networks},'' in \emph{Proceedings of the
  27th ACM Symposium on Principles of Distributed Computing (PODC)}, 2008, pp.
  45--54.

\bibitem{richa:jamming2}
A.~Richa, C.~Scheideler, S.~Schmid, and J.~Zhang, ``{A Jamming-Resistant MAC
  Protocol for Multi-Hop Wireless Networks},'' in \emph{Proceedings of the
  International Symposium on Distributed Computing (DISC)}, 2010, pp. 179--193.

\bibitem{richa:jamming3}
------, ``{Competitive and Fair Medium Access Despite Reactive Jamming},'' in
  \emph{Proceedings of the $31^{st}$ International Conference on Distributed
  Computing Systems (ICDCS)}, 2011, pp. 507--516.

\bibitem{richa:jamming4}
------, ``{Competitive and Fair Throughput for Co-Existing Networks Under
  Adversarial Interference},'' in \emph{Proceedings of the $31^{st}$ ACM
  Symposium on Principles of Distributed Computing (PODC)}, 2012.

\bibitem{ogierman:competitive}
A.~Ogierman, A.~Richa, C.~Scheideler, S.~Schmid, and J.~Zhang, ``{Competitive
  MAC under adversarial SINR},'' in \emph{IEEE Conference on Computer
  Communications (INFOCOM)}, 2014, pp. 2751--2759.

\bibitem{richa:efficient-j}
A.~Richa, C.~Scheideler, S.~Schmid, and J.~Zhang, ``{An Efficient and Fair MAC
  Protocol Robust to Reactive Interference.}'' \emph{IEEE/ACM Transactions on
  Networking}, vol.~21, no.~1, pp. 760--771, 2013.

\bibitem{richa:competitive-j}
------, ``{Competitive Throughput in Multi-Hop Wireless Networks Despite
  Adaptive Jamming},'' \emph{Distributed Computing}, vol.~26, no.~3, pp.
  159--171, 2013.

\bibitem{Tan2014}
H.~Tan, C.~Wacek, C.~Newport, and M.~Sherr, \emph{{A Disruption-Resistant MAC
  Layer for Multichannel Wireless Networks}}, 2014, pp. 202--216.

\bibitem{ChlebusKoRo06}
B.~S. Chlebus, D.~R. Kowalski, and M.~A. Rokicki, ``Adversarial queuing on the
  multiple-access channel,'' in \emph{Proc.\ Twenty-Fifth Annual ACM Symposium
  on Principles of Distributed Computing (PODC}, 2006, pp. 92--101.

\bibitem{ChlebusKoRo12}
------, ``Adversarial queuing on the multiple access channel,'' \emph{ACM
  Transactions on Algorithms}, vol.~8, no.~1, p.~5, 2012.

\bibitem{anantharamu:adversarial-opodis}
L.~Anantharamu, B.~S. Chlebus, and M.~A. Rokicki, ``{Adversarial Multiple
  Access Channel with Individual Injection Rates},'' in \emph{Proceedings of
  the 13th International Conference on Principles of Distributed Systems
  (OPODIS)}, 2009, pp. 174--188.

\end{thebibliography}

% Generated by IEEEtran.bst, version: 1.14 (2015/08/26)

\end{document}